\documentclass{article}
\PassOptionsToPackage{numbers,sort&compress}{natbib}
\usepackage[final]{neurips_2024}

\usepackage[utf8]{inputenc}
\usepackage[T1]{fontenc}
\usepackage{amsfonts}
\usepackage{amsmath}
\usepackage{amssymb}
\usepackage{amsthm}
\usepackage{thmtools}
\usepackage{thm-restate}
\usepackage{url}
\usepackage{booktabs}
\usepackage{graphicx}
\usepackage{microtype}
\usepackage{xcolor}
\usepackage[ruled,vlined,linesnumbered]{algorithm2e}
\usepackage{braket}
\usepackage{mathtools}
\usepackage{caption}
\usepackage{subcaption}
\usepackage{hyperref}
\usepackage[nameinlink,capitalise,noabbrev]{cleveref}
\usepackage{enumitem}

\declaretheorem{theorem}
\declaretheorem[sibling=theorem]{proposition}
\declaretheorem[sibling=theorem]{lemma}
\declaretheorem[parent=theorem]{corollary}
\declaretheorem[style=remark, parent=theorem]{remark}
\declaretheorem[style=definition]{definition}
\declaretheorem[style=remark, parent=definition]{note}

\DeclareMathOperator{\poly}{poly}
\DeclareMathOperator*{\argmin}{arg\,min}
\DeclareMathOperator*{\argmax}{arg\,max}

\IncMargin{1.5em}

\mathchardef\mhyphen="2D

\allowdisplaybreaks
\captionsetup[table]{skip=3mm}
\captionsetup[figure]{aboveskip=-3mm, belowskip=-3mm}
\captionsetup[subfigure]{aboveskip=-7mm,belowskip=-1mm,singlelinecheck=off}

\hypersetup{colorlinks,
    citecolor=blue,
    linkcolor=red,
    urlcolor=purple,
    breaklinks=true}
    
\newcommand{\ketbra}[2]{\lvert#1\rangle\langle#2\rvert}
\newcommand{\bm}[1]{\boldsymbol{#1}}

\title{Quantum algorithm for large-scale market equilibrium computation}

\author{
  Po-Wei Huang\textsuperscript{1} and Patrick Rebentrost\textsuperscript{1, 2}\\
  \textsuperscript{1}Centre for Quantum Technologies,  National University of Singapore, Singapore 117543\\
  \textsuperscript{2}Department of Computer Science, National University of Singapore, Singapore 117417\\
  \texttt{huangpowei22@u.nus.edu},  \texttt{patrick@comp.nus.edu.sg}
}

\begin{document}

\maketitle
\begin{abstract}
Classical algorithms for market equilibrium computation such as proportional response dynamics face scalability issues with Internet-based applications such as auctions, recommender systems, and fair division, despite having an almost linear runtime in terms of the product of buyers and goods.  
In this work, we provide the first quantum algorithm for market equilibrium computation with sub-linear performance. Our algorithm provides a polynomial runtime speedup in terms of the product of the number of buyers and goods while reaching the same optimization objective value as the classical algorithm. Numerical simulations of a system with 16384 buyers and goods support our theoretical results that our quantum algorithm provides a significant speedup.
\end{abstract}

\section{Introduction}
The balance of supply and demand is a fundamental and well-known law that determines the price of goods in a market. In a market with a set of $n$ buyers and $m$ goods, the \emph{competitive equilibrium}~\citep{arrow1951extension, debreu1951coefficient} determines the optimal price and allocation of goods such that the supply equals the demand in the given market. The computation of the competitive equilibrium is known as the \emph{market equilibrium computation problem}, whose unique solution was shown to exist under a general model of the economics in the seminal work of \citet{arrow1954existence}. The relevance of such problems in algorithmic game theory~\citep{vazirani2007combinatorial, codenotti2007computation} is substantiated by the first welfare theorem, which implies that the competitive equilibria are \emph{Pareto-efficient}~\citep{mascolell1995equilibrium}, where no allocation is available that makes one agent better without making another one worse. In \emph{competitive equilibrium from equal income} (CEEI) scenarios, such equilibria are further known to by \emph{envy-free}~\citep{foley1967resource, varian1974equity}, where no agent would prefer an allocation received by another agent over their own.

The market equilibrium computation problem has, in recent years, been extended to various large-scale Internet-based markets~\citep{kroer2021market}, including auction markets~\citep{conitzer2022pacing}, fair item allocation/fair division~\citep{othman2010finding, budish2017course, babaioff2019fair}, scheduling problems~\citep{im2017competitive} and recommender systems~\citep{kroer2022computing}. Such developments call for the need to further develop algorithmic theories for markets and the computation of market equilibria. We focus on a particular type of market known as the Fisher market~\citep{fisher1891mathematical, brainard2005how}, where there is a set of $n$ buyers interested in buying $m$ infinitely-divisible goods, and where each buyer has their monetary budget that has no intrinsic value apart from being used to purchase goods. We mainly consider Fisher markets with linear utilities, where the total utility gained by purchasing goods is strictly linear to the value and proportion of the goods obtained. 

For combinatorial formulations of the market equilibrium problem, algorithms have been discovered that can obtain exact and approximate solutions~\citep{scarf1967core, devanur2008market, orlin2010improved, vegh2012strongly}, these algorithms scale poorly against the growing number of buyers and goods. One can also formulate the market equilibrium computation problem as an optimization problem that maximizes a convex objective function known as the Eisenberg-Gale (EG) convex program~\citep{eisenberg1959consensus, eisenberg1961aggregation}. For such optimization problems, approximate solutions can be found much faster. One such algorithm of the market equilibrium problem is the \emph{proportional response} (PR) dynamics~\citep{wu2007proportional, zhang2011proportional}. The PR dynamics is an iterative algorithm that converges with a rate of $\frac{1}{T}$ where $T$ is the number of iterations. Each iteration has a cost of $\mathcal{O}(mn)$ from proportionally updating individual bids of each buyer for different goods.

Given the high number of buyers and goods that can exist in Internet-based markets, the problem of further algorithmic speedups to the computation continues to be an active field of research. \citet{gao2020first} discovered that by using projected gradient descent instead of PR dynamics, the market equilibrium can be found with linear convergence. Other attempts that aim to reduce the cost per iteration, such as using clustering to reduce the problem size~\citep{kroer2022computing}, have also been made. However, it is unclear whether advantages beyond a constant-factor speedup can be provided.

While it is unclear whether additional classical strategies can provide further algorithmic speedup to the market equilibrium computation problem, one can utilize algorithmic developments in quantum computation to achieve such goals. Quantum computation~\citep{nielsen2010quantum} is an emerging technology that has been utilized for algorithmic speedups in various optimization problems~\citep{abbas2024challenges} such as linear programming~\citep{vanapeldoorn2019quantum} and semidefinite programming~\citep{brandao2017quantum,brandao2019quantum,vanapeldoorn2019improvements,vanapeldoorn2020quantum}.

In this work, we consider a Fisher market with $n$ buyers and $m$ goods, where the objective is to find an approximate market equilibrium whose EG objective function is within an additive error $\varepsilon$ of the optimal EG objective value. We provide a method to reduce the cost per iteration by utilizing quantum norm estimation and quantum inner product estimation~\citep{li2019sublinear, rebentrost2021quantum} and provide the first quantum algorithm to achieve sublinear performance in terms of the product of the buyers and goods in market equilibrium computation. To arrive at the quantum algorithm, we show an alternate version of the PR dynamics with erroneous updates, which we term the faulty proportional response (FPR) dynamics. We then provide a quantum algorithm that provides a quadratic speedup in terms of the smaller dimension between buyers and goods, as well as less memory consumption, albeit being based on QRAM instead of classical RAM. We summarize our results in \cref{tabMain}. 

\begin{table}
\centering
\begin{tabular}{@{}lccccc@{}}
\toprule
Algorithm  & Iterations &  Runtime & Memory & Result Prep.\\
\midrule
PR dynamics~\citep{wu2007proportional}  &  $\dfrac{\log m}{\varepsilon}$ & $\tilde{\mathcal O} \Big(\dfrac{mn}{\varepsilon}\Big)$ & $\mathcal O (mn)$ & N/A, in RAM\\
Our work & $\dfrac{2\log m}{\varepsilon}$ & $ \tilde{\mathcal O} \bigg(\dfrac{\sqrt{mn\max(m,n)}}{\varepsilon^2}\bigg)$ & $\mathcal O (m+n)^*$ & \begin{tabular}{@{}c@{}}QA: $\mathcal O (\poly\log \frac{mn}{\varepsilon})$ \\ SA: $\tilde{\mathcal O} (\sqrt{mn})$\end{tabular}\\
\bottomrule
\end{tabular}
\caption{\textit{Main results.} In this work, $n$ is the number of buyers, $m$ is the number of goods, and $\varepsilon$ indicates the additive error of the computed values to the minimally-achievable EG objective value. The memory complexity for the quantum algorithm (annotated with *) refers to the use of quantum query access to classical memory, achievable by QRAM (see \cref{defQRAM}), instead of classical RAM. As the computed competitive equilibrium consumes $\mathcal{O}(mn)$ memory, our quantum algorithm does not provide the entire bid matrix, but instead provides quantum query access (QA) and sample access (SA) to the competitive equilibrium. The result preparation column refers to the additional runtime cost of preparing QA and SA.}
\label{tabMain}
\end{table}

Unlike the classical PR algorithm, which provides an entire matrix corresponding to the competitive equilibrium with storage $\mathcal O (mn)$ space,\footnote{Given that it takes $\mathcal O (mn)$ to even output the entire matrix, a sublinear algorithm would require a different output model in the first place, since the sublinearity would be undone by the solution generation.} we provide quantum query and sample access to the values of the competitive equilibrium, which allows access to the individual values by index querying as well as $\ell_1$ sample access to the values. This access format has previously been used as the output model for quantum recommendation systems~\citep{kerenidis2017quantum}, quantum linear system solvers~\citep{harrow2009quantum}, but has been pointed out to have significant caveats~\citep{aaronson2015read}. While the quantum algorithm does not exactly ``solve'' the market equilibrium problem in the sense of outputting the entire matrix, quantum query and sample access provide a preparation of a quantum state encoding the solution that can be used for further computation, for extracting only a small set of values within the matrix, or for extracting certain properties of the matrix, which may indeed be the use case for large-scale distributed systems.

\section{Preliminaries}

\paragraph{Notations.} Let $[n]:=\{0, 1, \dots, n-1\}$. We use $\odot$ to represent element-wise multiplication, as well as $\oplus$ for bit-wise XOR operation and $\otimes$ for tensor products. For vectors $v \in \mathbb{R}^N$, we denote a vector's $\ell_p$ norm by $\|v\|_p:= \sqrt[p]{\sum_{i=1}^N |v_i|^p}$. Let $\mathcal M_{M\times N}(\mathbb{R})$ indicate the space of square matrices of size $M\times N$ over $\mathbb{R}$. We denote the $i$-th row vector of $A$ by $A_{i, *}$ and the $j$-th column vector of $A$ by $A_{*, j}$. We further define $\mathbb{I}$ as $[0,1]$, and the $n$-unit simplex as $\mathbb{S}^n$, i.e. $\mathbb{S}^n = \{v \in \mathbb{I}^n, \|v\|_1 = 1\}$. For sets of numbers, we add the subscript $\cdot_+$ to indicate a constraint on positivity for elements in the set. We use $\ket{k}$ to denote a binary encoding of a real number $k$ up to arbitrary precision into a quantum state, and $\ket{\bar{0}}$ to denote a multi-qubit zero state whose number of qubits can be inferred from the context. Lastly, we use $\mathcal {\tilde O} (\cdot)$ to omit polylogarithmic factors in asymptotic runtime/memory analysis.

\paragraph{Quantum computation.}
Quantum algorithms are shown to be able to provide asymptotic speedups over classical counterparts~\citep{montanaro2016quantum,dalzell2023quantum, abbas2024challenges} by utilizing characteristics of quantum mechanics such as preparing superpositions of different computational paths.

This property of quantum algorithms allows one to simulate the probability distribution of classical randomized algorithms directly as amplitudes of the quantum state, with the accepted results of the classical algorithms labelled as ``good'' states, and the rejected results as ``bad'' states. The \emph{quantum amplitude amplification} (QAA)~\citep{brassard2002quantum} technique amplifies the amplitudes ``good'' states such that the success probability increases to a sufficiently high occurrence from its original $p$ upon measurement at a runtime cost of $\mathcal{O}(1/\sqrt{p})$. This provides a quadratic speedup compared to classical Monte Carlo methods, which take on average $\mathcal{O}(1/p)$ samples to achieve success. The QAA technique gives way to another technique known as \emph{quantum amplitude estimation} (QAE)~\citep{brassard2002quantum}, which combines QAA with eigenvalue estimation via quantum phase estimation (QPE)~\citep{kitaev1995quantum} to directly estimate the probability value of the occurrence of a certain event with a quadratic speedup.

\begin{theorem}[Quantum amplitude estimation; Theorem 12, \citep{brassard2002quantum}, Formulation of \citep{montanaro2015quantum}]\label{theoremQAE}
Let $t\in \mathbb{N}$. We are given one copy of a quantum state $\ket \psi$ as input, as well as a unitary transformation $U = I - 2\ketbra{\psi}{\psi}$, and a unitary transformation $V = I - 2P$ for some projector $P$. There exists a quantum algorithm that outputs $\tilde a$, an estimate of $a = \|P\ket{\psi}\|^2$, such that
\begin{equation*}
    \textstyle|\tilde a - a| \le 2\pi\frac{\sqrt{a(1-a)}}{M}+\frac{\pi^2}{M^2}
\end{equation*}
with probability at least $8/\pi^2$, using $M$ applications of $U$ and $V$ each.
\end{theorem}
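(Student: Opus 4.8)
The plan is to recover the statement from \emph{quantum phase estimation} (QPE) applied to the Grover-type amplitude-amplification operator $Q := -UV = -\bigl(I-2\ketbra{\psi}{\psi}\bigr)\bigl(I-2P\bigr)$, which costs one application each of $U$ and $V$ (the global sign being free). First I would isolate the two-dimensional real subspace $\mathcal{H}_\psi := \mathrm{span}\{\ket{\psi_0},\ket{\psi_1}\}$, where $\ket{\psi_1} := P\ket{\psi}/\sqrt{a}$ is the normalized ``good'' component and $\ket{\psi_0} := (I-P)\ket{\psi}/\sqrt{1-a}$ the ``bad'' one, so that $\ket{\psi} = \cos\theta\,\ket{\psi_0} + \sin\theta\,\ket{\psi_1}$ with $a = \sin^2\theta$ and $\theta\in[0,\pi/2]$. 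A short computation shows that both $U$ and $V$ leave $\mathcal{H}_\psi$ invariant and act on it as reflections, so $Q|_{\mathcal{H}_\psi}$ is a rotation by angle $2\theta$. The degenerate edge cases $a\in\{0,1\}$, where $\mathcal{H}_\psi$ collapses to a line, I would dispatch separately and check that $\tilde a=\sin^2\tilde\theta$ remains consistent with the stated bound.

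Next I would diagonalize $Q|_{\mathcal{H}_\psi}$: its eigenvalues are $e^{\pm 2i\theta}$ with eigenvectors $\ket{\psi_\pm} := \tfrac{1}{\sqrt 2}\bigl(\ket{\psi_0}\pm i\ket{\psi_1}\bigr)$, and $\ket{\psi}$ decomposes as the balanced superposition $\tfrac{1}{\sqrt 2}\bigl(e^{i\theta}\ket{\psi_+}+e^{-i\theta}\ket{\psi_-}\bigr)$ (up to an overall phase), i.e. equal weight on the two eigenstates of $Q$ whose eigenphases are $\pm 2\theta$. Running QPE on $Q$ with input $\ket{\psi}$ and $M$ total applications of $Q$ — implemented from the controlled powers $Q, Q^2, Q^4,\dots$ — then returns, by the sharp QPE guarantee, an integer outcome $\tilde y$ with $\tilde\theta := \pi\tilde y/M$ satisfying $|\tilde\theta-\theta|\le\pi/M$ with probability at least $8/\pi^2$ (the two eigenphases collapse under $\sin^2$, so it is irrelevant which branch is sampled); the algorithm outputs $\tilde a := \sin^2\tilde\theta$.

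Finally I would propagate the phase error to the amplitude using $\sin^2 x - \sin^2 y = \sin(x+y)\sin(x-y)$ together with $\tilde\theta+\theta = 2\theta + (\tilde\theta-\theta)$, $|\sin 2\theta| = 2\sin\theta\cos\theta = 2\sqrt{a(1-a)}$, and $|\sin x|\le|x|$, which gives
\[
|\tilde a - a| \le |\sin(\tilde\theta-\theta)|\,\bigl(|\sin 2\theta| + |\sin(\tilde\theta-\theta)|\bigr) \le \frac{\pi}{M}\left(2\sqrt{a(1-a)} + \frac{\pi}{M}\right) = \frac{2\pi\sqrt{a(1-a)}}{M} + \frac{\pi^2}{M^2},
\]
which is exactly the claimed inequality, at a cost of $M$ uses of $Q$ and hence $M$ of $U$ and $V$ each.

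The main obstacle is less the trigonometry than two structural points. First, one must arrange the phase-estimation circuit so that it never needs a second copy of $\ket{\psi}$ nor a separate state-preparation unitary $\mathcal{A}$ with $\mathcal{A}\ket{\bar 0}=\ket{\psi}$, relying only on the given reflections $U$ and $V$ to realize the controlled powers of $Q$ on the supplied register; this is precisely the feature distinguishing this formulation from the original Brassard–Høyer–Mosca–Tapp estimator~\citep{brassard2002quantum} and requires care in how the counting register and the single input copy are coupled. Second, obtaining the exact constants $2\pi$ and $\pi^2$ and the precise success probability $8/\pi^2$ forces use of the tight QPE tail estimate — the probability that the best $M$-outcome lands within one unit of the true phase — rather than any looser Chernoff-style bound.
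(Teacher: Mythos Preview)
The paper does not prove this theorem at all: it is stated without proof and attributed to Montanaro~\citep{montanaro2015quantum} (itself a repackaging of Brassard--H{\o}yer--Mosca--Tapp~\citep{brassard2002quantum}), and is then used as a black box inside \Cref{lemmaNorm}. So there is no ``paper's own proof'' to compare against.

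Your sketch is the standard argument and is essentially correct: the two-dimensional invariant subspace, the identification of $Q=-UV$ as a rotation by $2\theta$, phase estimation to recover $\theta$ to within $\pi/M$ with probability at least $8/\pi^2$, and the trigonometric propagation $|\sin^2\tilde\theta-\sin^2\theta|\le (\pi/M)\bigl(2\sqrt{a(1-a)}+\pi/M\bigr)$ are exactly what underlies the cited result. Your remark about needing only the reflection $U=I-2\ketbra{\psi}{\psi}$ (plus the single input copy) rather than a preparation unitary $\mathcal A$ is precisely the point of Montanaro's formulation versus the original BHMT statement, and you have identified it correctly. If you were writing this up in full, the one place to be careful is verifying that the controlled powers of $Q$ needed for QPE can indeed be built from controlled-$U$ and controlled-$V$ without re-preparing $\ket\psi$; this is routine but worth spelling out.
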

In this paper, we use QAE to estimate $\ell_1$ norms and inner products of vectors $v\in\mathbb{I}^N$ up to a multiplicative error in $M \in \mathcal{O}(\frac{\sqrt{N}}{\varepsilon}\log(\frac{1}{\delta}))$ runtime with probability $1-\delta$~\citep{rebentrost2021quantum}, invoking a quadratic speedup in both the dimension and the error rate. We defer the formulation and details to \cref{appendixQroutine}.

Apart from quantum subroutines that provide speedups, we also require the usage of arithmetic operations such as addition, subtraction, multiplication, and division on quantum computers. We assume the arithmetic model, which would allow us to ignore issues arising from the fixed point representation of numbers.\footnote{If the fixed point representation with an additive error of $\mu$ is considered, the additional multiplicative cost required for operations is then $\mathcal{O}(\poly\log\frac{1}{\mu})$. Considering $\mu \in \Omega(1/\poly (m,n))$, the additional cost is $\mathcal{O}(\poly\log (m,n))$, which are polylogarithmic factors that we already omit in this paper.} We further assume that we have access to quantum arithmetic circuits~\citep{vedral1996quantum, takahashi2009quantum} that can perform such arithmetic operations in $\mathcal O(1)$ gates, and that by using such circuits, computation of the $n$-th power of a number, where $n\in\mathbb{N}$, can be achieved in $\mathcal{O}(\poly \log n)$ gates, using methods like binary exponentiation~\citep{montgomery1987speeding}. We note that quantum arithmetic circuits can be used to execute the same operation on multiple numbers in parallel if the numbers are held in superposition.

In regards to the access of data encoding in quantum states, there are two main models -- quantum query access and quantum sample access. For clarity, we highlight quantum states that indicate the index in such access models in bold font throughout the manuscript.

\begin{definition}[Quantum query access]
Let $n \in \mathbb N$ and $c\in \mathcal O (1)$. Let a vector of bit strings $w$ be such that $\forall j \in [n], w_j \in \{0, 1\}^c$, and let an arbitrary bit string be $x\in \{0, 1\}^c$. We define quantum query access to $w$ as the access to individual bit strings in $w$ for $j \in [n]$ in the format of
\begin{equation*}
\bm{\ket{j}}\ket{x} \rightarrow \bm{\ket{j}}\ket{x \oplus w_j}
\end{equation*}
operating on $\mathcal O (\log n)$ qubits.
\end{definition}
\begin{note}
Note that when $x = 0$, the quantum register stores $\ket{w_j}$ after the query. When $x = w_j$, the quantum register stores $\ket{0}$ after the query. Hence, two consecutive queries onto a quantum register is the identity operation.
\end{note}
\begin{definition}[Quantum sample access]
Let $n \in \mathbb N$ and $w\in \mathbb R^n$. We define quantum sample access to $w$ as the access to the index $j\in[n]$ by probability $w_j/\|w\|_1$ in the format of
\begin{equation*}
\textstyle\ket{\bar0} \rightarrow \sum_{j=0}^{n-1}\sqrt{\frac{w_j}{\|w\|_1}}\bm{\ket{j}}
\end{equation*}
operating on $\mathcal O (\log n)$ qubits.
\end{definition}
\begin{note}
In quantum mechanics, Born's rule~\citep{born1926zur} postulates that the probability of measurement outcome corresponding to a state $\bm{\ket{j}}$ is proportional to the square of its amplitude under superposition. Hence, the square root in the amplitude regarding the format of quantum sample access.
\end{note}

Lastly, we need to access the input matrices and intermediate vectors as a superposition of encoded quantum states. Such quantum query access to classical data in memory can be achieved by \emph{quantum random access memory} (QRAM) as follows.\footnote{Our memory unit can be more precisely termed QRACM~\citep{kuperberg2013another, jaques2023qram}, including a classical memory, or quantum read-only memory (QROM)~\cite{babbush2018encoding} as opposed to QRAQM~\citep{kuperberg2013another, jaques2023qram} or the quantum random access gate (QRAG)~\citep{ambainis2007quantum}, whose memory registers store quantum states instead of classical numbers. However, both are more commonly and jointly referred to as QRAM in the literature.} We refer the reader to \citep{jaques2023qram} for a detailed survey.
\begin{definition}[Quantum random access memory;~\citep{giovannetti2008quantum, giovannetti2008architectures}]
    \label{defQRAM}
    Let $n \in \mathbb N$ and $c\in \mathcal{O}(1)$. Also let $w$ be a vector of bit strings such that $\forall i \in [n], w_i \in \{0, 1\}^c$. A quantum RAM provides quantum query access to $w_i$ in superposition after a one-time construction cost of $\tilde{\mathcal O}(n)$, where each access costs $\mathcal O(\poly\log n)$.
\end{definition}

\begin{note}
Quantum sample access can also be provided by QRAM via Grover-Rudolph procedure~\citep{grover2002creating} during construction with logarithmic overhead but is not used in this work.
\end{note}

\paragraph{Fisher market equilibrium.}
In the Fisher market model~\citep{fisher1891mathematical, brainard2005how}, we are given a market of $m$ infinitely divisible goods to be divided among $n$ buyers. Without loss of generality, we assume a unit supply for each good. Each buyer $i \in [n]$ has a budget of $B_i>0$ that has no intrinsic value apart from being used to purchase goods where, again without loss of generality, we assume that the corresponding vector $B\in \mathbb{S}^n$. Each buyer also has a utility function $u_i: \mathbb{R}^m \to \mathbb{R}_+$ that maps an allocation of portions of $m$ items to a utility value. We can then define the allocation matrix $x \in \mathcal{M}_{n\times m}(\mathbb{I})$ such that $x_{ij}$ is the portion of item $j$ allocated to buyer $i$, where $x_i \in \mathbb{I}^m$ is the bundle of products allocated to buyer $i$. In this paper, we consider linear utility functions such that $u_i(x_i) = \sum_{j \in [m]} v_{ij}x_{ij}$, where $v_{ij} > 0$ is the value for a unit of item $j$ for buyer $i$.

Given the Fisher market, we want to compute its {\it competitive equilibrium}, which consists of the price vector $p\in\mathbb{R}_+^m$ for each item $j$ and allocation matrix $x$ such that each buyer $i$ exhausts their entire budget $B_i$ to acquire a bundle of items $x_i$ that maximizes each of their utility $u_i(x_i)$.

The market equilibrium of Fisher markets can be captured by solving the Eisenberg-Gale (EG) convex program~\citep{eisenberg1959consensus, eisenberg1961aggregation}. The program is derived from maximizing the budget-weighted geometric mean of the buyers' utilities i.e. the Nash social welfare, satisfying natural properties such as invariance of the optimal solution to rescaling and splitting \citep{jain2010eisenberg}, and balances the efficiency and fairness regarding the allocation of goods. Applying the $\log$ on the geometric mean, the EG program is as follows: 
\begin{equation}
     \textstyle\max_{x\ge0}  \sum_{i \in [n]} B_i\log u_i(x_i) \text{ s.t. }\sum_{i \in [n]} x_{ij} = 1, \forall j \in [m].
\end{equation}
where the price $p_j$ is the dual variable of the constraint on $x_{ij}$. Such convex programs (maximization of a concave function subject to constraints) can be solved by interior point methods~\citep{ boyd2004interior}, but may not scale to large markets. We discuss this further in \cref{sectDiscussion}.

For the linear Fisher market, an alternative convex program that obtains the same market equilibrium was shown by \citet{shmyrev2009algorithm}. Supposing that each buyer $i$ submits a bid $b_{ij}$ for item $j$ such that the sum of the bid of the buyer matches their budget $B_i$ such that each buyer $i$ is allocated $x_{ij} = b_{ij}/p_j$ of item $j$, we have the following convex program: 
\begin{equation}
\textstyle\max_{b\ge0}\sum_{ij}b_{ij}\log \frac{v_{ij}}{p_j} \text{ s.t. }\sum_{i \in [n]} b_{ij} = p_j, \forall j \in [m]; \sum_{j \in [m]} b_{ij} = B_i, \forall i \in [n].
\end{equation}
As the allocation matrix and price vector can be directly computed from and conversely, be used to compute the bid matrix, the bid matrix can be used as a direct representation of the market equilibrium itself, and hence, is the output of the algorithm we discuss in our paper.

\paragraph{Proportional response dynamics.}
The proportional response (PR) dynamics is an iterative algorithm~\citep{wu2007proportional,zhang2011proportional,levin2008bittorrent} that obtains the Fisher market equilibrium computation by updating the bids $b_{ij}$ submitted by buyer $i$ for item $j$. For each time step $t$, the elements of the price vector $p_j^{(t)}$ are computed by summing the bids for item $j$ such that $p_j^{(t)} = \sum_i b_{ij}^{(t)}$. The allocation $x_{ij}^{(t)}$ is then obtained by taking $x_{ij}^{(t)} = b_{ij}^{(t)} / p_j^{(t)}$. The buyers then update the bids such that the new bid is proportional to the utility $u_i^{(t)}= \sum_{j}v_{ij}x_{ij}^{(t)}$ gained in the current time step such that $b_{ij}^{(t+1)} = B_i v_{ij}x_{ij}^{(t)}/u_i^{(t)}$.
It was shown by \citet{birnbaum2011distributed} that the PR dynamics is equivalent to mirror descent~\citep{nemirovsky1983problem,beck2003mirror} with respect to a Bregman divergence~\citep{bregman1967relaxation} of the Shmyrev convex program.

For ease of discussion, we write the objective function of the EG and Shmyrev convex programs as functions of the bid matrix $b$, obtaining the EG objective function  $\Phi(b) = -\sum_{i \in [n]} B_i\log u_i$ and Shmyrev objective function  $\Psi(b) = \sum_{i \in [n], j \in [m]}  b_{ij}\log \frac{p_j}{v_{ij}}$.
We denote the optimal bid $b^* = \argmin_{b\in \mathcal{S}} \Phi(b)$, where $\mathcal S = \big\{b \in \mathcal M_{n\times m}(\mathbb I): \sum_{j\in[m]} b_{ij} = B_i\big\}$.

The convergence bounds of the PR dynamics regarding the EG and Shmyrev objective functions for linear Fisher markets were found as follows:
\begin{theorem}[Convergence of PR dynamics; \citep{birnbaum2011distributed}]\label{theoremPRconverge}
    Considering a linear Fisher market, for $b_{ij}^{(t)}$ as iteratively defined by the proportional response dynamics where $b_{ij}^{(0)} = \frac{B_i}{m}$, we have 
    \begin{equation}
        \textstyle\Psi(b^{(T)}) - \Psi(b^*) \le \frac{\log m}{T}, \quad \Phi(b^{(T-1)}) - \Phi(b^*) \le \frac{\log m}{T}.
    \end{equation}
\end{theorem}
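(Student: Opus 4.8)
The plan is to prove both inequalities at once by viewing the proportional response update as mirror descent on the Shmyrev objective $\Psi$ with the (generalized) negative-entropy mirror map $h(b)=\sum_{i,j}b_{ij}\log b_{ij}$ (with Bregman divergence $D_h$) and unit step size, and then running the ``relative smoothness'' convergence argument of Bauschke--Bolte--Teboulle. First I would confirm this identification: a direct computation gives $\partial\Psi/\partial b_{ij}=\log(p_j/v_{ij})+1$, and solving the row-constrained proximal step $b^{(t+1)}=\argmin_{b'\in\mathcal S}\{\langle\nabla\Psi(b^{(t)}),b'\rangle+D_h(b',b^{(t)})\}$ with one Lagrange multiplier per row yields exactly $b_{ij}^{(t+1)}=B_i v_{ij}x_{ij}^{(t)}/u_i^{(t)}$, the multiplier forcing the normalization $B_i/u_i^{(t)}$ of row $i$.

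The analysis rests on three facts. \emph{(i) Convexity and relative smoothness.} Writing $\Psi(b)=\sum_j p_j\log p_j-\sum_{i,j}b_{ij}\log v_{ij}$ with $p_j=\sum_i b_{ij}$ linear in $b$ shows $\Psi$ is convex; moreover $h-\Psi$ equals, up to a linear term, $\sum_j\bigl(\sum_i b_{ij}\log b_{ij}-p_j\log p_j\bigr)$, each summand being the perspective of negative entropy evaluated at $(b_{*,j},p_j)$, hence jointly convex. So $\Psi$ is $1$-smooth relative to $h$, giving the descent inequality $\Psi(b^{(t+1)})\le Q_t(b^{(t+1)})$ where $Q_t(b')=\Psi(b^{(t)})+\langle\nabla\Psi(b^{(t)}),b'-b^{(t)}\rangle+D_h(b',b^{(t)})$. \emph{(ii) A one-step identity.} Substituting the explicit $b^{(t+1)}$ into $Q_t$ and using $\log(p_j^{(t)}/v_{ij})=\log(b_{ij}^{(t)}/b_{ij}^{(t+1)})+\log(B_i/u_i^{(t)})$, the $b^{(t)}$- and $b^{(t+1)}$-dependent terms cancel and one is left with $Q_t(b^{(t+1)})=\sum_i B_i\log(B_i/u_i^{(t)})=:\widehat\Phi(b^{(t)})$, which equals $\Phi(b^{(t)})+\sum_i B_i\log B_i$. \emph{(iii) A Jensen bound.} Concavity of $\log$ applied to $\sum_j (b_{ij}/B_i)\log(v_{ij}/p_j)\le \log(u_i/B_i)$ gives $\widehat\Phi(b)\le\Psi(b)$ for every feasible $b$; since $\Phi$ and $\widehat\Phi$ differ by a constant, $b^*$ minimizes $\widehat\Phi$ as well and $\widehat\Phi(b^*)=\Psi(b^*)$.

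Putting these together, $\widehat\Phi(b^{(t+1)})\le\Psi(b^{(t+1)})\le Q_t(b^{(t+1)})=\widehat\Phi(b^{(t)})$, so $t\mapsto\widehat\Phi(b^{(t)})$ and $t\mapsto\Psi(b^{(t)})$ are non-increasing and $\Psi(b^{(t+1)})\le\widehat\Phi(b^{(t)})$. Next, combining the descent inequality with convexity of $\Psi$ and the optimality condition of the proximal step (the three-point inequality $\langle\nabla\Psi(b^{(t)}),b^{(t+1)}-b^*\rangle+D_h(b^{(t+1)},b^{(t)})\le D_h(b^*,b^{(t)})-D_h(b^*,b^{(t+1)})$) yields $\widehat\Phi(b^{(t)})-\Psi(b^*)\le D_h(b^*,b^{(t)})-D_h(b^*,b^{(t+1)})$. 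Summing $t=0,\dots,T-1$, dropping $D_h(b^*,b^{(T)})\ge0$, and using monotonicity of $\widehat\Phi$ gives $T\bigl(\widehat\Phi(b^{(T-1)})-\Psi(b^*)\bigr)\le D_h(b^*,b^{(0)})$, while a one-line computation with $b^{(0)}_{ij}=B_i/m$ shows $D_h(b^*,b^{(0)})=\log m-\sum_i B_i\,H\!\bigl(b^*_{i,*}/B_i\bigr)\le\log m$. Since $\widehat\Phi(b^{(T-1)})-\Psi(b^*)=\Phi(b^{(T-1)})-\Phi(b^*)$, this is the second bound; and $\Psi(b^{(T)})-\Psi(b^*)\le\widehat\Phi(b^{(T-1)})-\Psi(b^*)\le \log m/T$ is the first.

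I expect ingredient (ii) to be the delicate point: the exact cancellation turning $Q_t(b^{(t+1)})$ into $\widehat\Phi(b^{(t)})$ (rather than $\Psi(b^{(t)})$ or $\widehat\Phi(b^{(t+1)})$) is precisely what produces the one-step index shift distinguishing the $\Psi$- and $\Phi$-bounds, and it hinges on tracking the row normalizers $B_i/u_i^{(t)}$ correctly. One should also take a little care that the relative-smoothness and proximal arguments are run on the relative interior, which is fine because $v_{ij}>0$ and $b^{(0)}_{ij}>0$ keep all iterates strictly positive, while $b^*$ may lie on the boundary but $D_h(b^*,\cdot)$ stays finite there; verifying the joint convexity of $h-\Psi$ in (i) is routine but also worth doing with care.
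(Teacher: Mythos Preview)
Your argument is essentially correct and lands on the same telescoping inequality $\Phi(b^{(t)})-\Phi(b^*)\le D(b^*\|b^{(t)})-D(b^*\|b^{(t+1)})$ as the paper. The packaging differs: you go through the full Bauschke--Bolte--Teboulle relative-smoothness machinery (identifying PR as mirror descent on $\Psi$, proving $1$-smoothness of $\Psi$ relative to $h$, and invoking the three-point inequality), whereas the paper's Appendix~B proof is deliberately more elementary. There, the key step $\Psi(b^{(t+1)})\le\widehat\Phi(b^{(t)})$ is obtained by a single Jensen application to $\Psi(b^{(t+1)})$ directly (their Lemma~B.3), bypassing the need to verify relative smoothness or compute $Q_t(b^{(t+1)})$; and the telescoping bound is obtained by expanding $\sum_{ij}b^*_{ij}\log(b^{(t+1)}_{ij}/b^{(t)}_{ij})$ explicitly (Lemma~B.5) rather than via the three-point lemma. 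Your route is the one closer to Birnbaum et al.'s original; the paper's route trades abstraction for a short self-contained computation, which is what they later perturb to handle the faulty updates.

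One genuine gap: in your item (iii) you assert $\widehat\Phi(b^*)=\Psi(b^*)$, but the justification ``$b^*$ minimizes $\widehat\Phi$'' together with $\widehat\Phi\le\Psi$ does not imply equality at $b^*$. What is needed is that the Jensen inequality in (iii) is \emph{tight} at $b^*$, i.e.\ that $v_{ij}/p^*_j$ is constant over $\{j:b^*_{ij}>0\}$ for each $i$; this is exactly the KKT stationarity condition $B_i/u_i^*=p_j^*/v_{ij}$ of the EG program, which the paper isolates as Lemma~B.1. Without it your final identification $\widehat\Phi(b^{(T-1)})-\Psi(b^*)=\Phi(b^{(T-1)})-\Phi(b^*)$ is unjustified. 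The fix is short (write the KKT conditions for $\max\sum_iB_i\log u_i$ over $x\ge0$, $\sum_ix_{ij}\le1$), but it should be stated.
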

We provide an alternate end-to-end proof of the convergence of both convex programs in \cref{appendixPRProof} that varies from \citet{birnbaum2011distributed}'s approach and, unlike the latter, is centered around the EG function instead of the Shmyrev function. Elements of this proof are used in the proof of theorems in later sections. Two notable results that we prove and utilize are:  1) $\Psi(b^{(t+1)}) \le \Phi(b^{(t)}) + \sum_{i \in [n]} B_i \log B_i \le \Psi(b^{(t)})$, and 2) the telescoping sum of the difference of the KL divergence~\citep{kullback1951information} of the optimal bid and the iterating bids can be lower bounded by the difference of the current EG objective function and the optimal EG function.

\section{Faulty proportional response dynamics}
Before moving on to our quantum algorithm, we propose the faulty proportional response (FPR) dynamics, which computes an erroneous update to compute a sequence of bids $\hat b_{ij}^{(t)}$. Such updates still retain a convergence guarantee, and serve as a counterpart to \cref{theoremPRconverge}. We first define a faulty update we use for the FPR dynamics:
\begin{definition}[Faulty proportional response update]\label{defFPRupdate}
    Let $t\ge 0$ and $\hat b^{(t)} \in \mathcal M_{n\times m}(\mathbb I_+)$. Let us be given $\varepsilon_p \in (0, 0.5)$ and $\tilde p^{(t)}$ such that $\forall j, t, |\tilde p_j^{(t)} - \hat p_j^{(t)}|\le  \hat p_j^{(t)}\varepsilon_p$ where $\hat p_j^{(t)} = \sum_{i\in [n]} \hat b_{ij}^{(t)}$. 
    Further, let us be given $\varepsilon_\nu \in (0, 0.5)$ and $\tilde \nu^{(t)}$ such that $\forall i, t, |\tilde \nu_i^{(t)} - \hat \nu_i^{(t)}|\le  \hat \nu_j^{(t)}\varepsilon_\nu$ where $\hat \nu_i^{(t)} = \sum_{j\in [m]} v_{ij} \hat b_{ij}^{(t)}/\tilde p_j^{(t)}$. A faulty proportional response update of the bids from timestep $t$ to $t+1$ is then expressed as follows:
    \begin{equation*}
    \textstyle\hat x_{ij}^{(t)} = \frac{\hat b_{ij}^{(t)}}{\tilde p_{j}^{(t)}}, \quad \hat b_{ij}^{(t+1)} = B_i \frac{v_{ij}\hat x_{ij}^{(t)}}{\tilde \nu_i^{(t)}}.
    \end{equation*}
\end{definition}
Note that while $\tilde p_j$ provides an estimation to the price $\hat p_j = \sum_{i\in[n]} \hat b_{ij}$, $\tilde \nu_i$ does not provide an estimation to the exact utility $\hat u_i = \sum_{j\in[m]} v_{ij}\hat b_{ij}/\hat p_j$. Instead, $\tilde \nu_i$ estimates $\hat \nu_i$, which replaces $\hat p_j$ in the computation of $u_i$ with $\tilde p_j$.

We find the convergence bounds of the FPR dynamics regarding the EG objective function for linear Fisher markets are as follows:
\begin{restatable}[Convergence of the FPR dynamics]{theorem}{fprconverge}
\label{theoremFPRconverge}
    Considering a linear Fisher market, for $b_{ij}^{(t)}$ as iteratively defined by the faulty proportional response dynamics where $\hat b_{ij}^{(0)} = \frac{B_i}{m}$, we have 
    \begin{equation*}
    \textstyle\min_{t\in [T]}\Phi(\hat b^{(t)}) - \Phi(b^*) \le \frac{2\log m}{T}
    \end{equation*}
    when $\varepsilon_\nu \le \frac{\log m}{8T}$ and $\varepsilon_p \le \frac{\log m}{6T}$.
\end{restatable}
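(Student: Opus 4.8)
The plan is to replay the EG‑centered convergence argument for the exact proportional response dynamics given in \Cref{appendixPRProof}, carrying the two multiplicative error parameters $\epsilon_p$ and $\epsilon_\nu$ through it so that every faulty step is still shown to decrease the EG objective $\Phi$ up to an additive slack of order $\epsilon_p+\epsilon_\nu$ per iteration. Throughout I write $D(a\|c):=\sum_{ij}a_{ij}\log(a_{ij}/c_{ij})$, $\hat p^{(t)}_j:=\sum_i\hat b^{(t)}_{ij}$ for the \emph{true} price of the faulty iterate, and $\hat u^{(t)}_i:=\sum_j v_{ij}\hat b^{(t)}_{ij}/\hat p^{(t)}_j$ for its \emph{true} utility, so that $\Phi(\hat b^{(t)})=-\sum_i B_i\log\hat u^{(t)}_i$; recall from \Cref{defFPRupdate} that $\tilde p^{(t)}$ estimates $\hat p^{(t)}$ whereas $\tilde\nu^{(t)}$ estimates $\hat\nu^{(t)}:=\sum_j v_{ij}\hat b^{(t)}_{ij}/\tilde p^{(t)}_j$, only a $\tilde p$‑distorted proxy for $\hat u^{(t)}$. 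Note that $\Phi(\hat b^{(t)})$ need not be monotone in $t$, which is why the statement asks for $\min_{t\in[T]}$ rather than the last iterate.

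First I would establish a faulty analogue of the telescoping identity driving \Cref{theoremPRconverge}. Substituting $\hat x^{(t)}_{ij}=\hat b^{(t)}_{ij}/\tilde p^{(t)}_j$ into \Cref{defFPRupdate} gives $\hat b^{(t+1)}_{ij}/\hat b^{(t)}_{ij}=B_i v_{ij}/(\tilde p^{(t)}_j\tilde\nu^{(t)}_i)$, hence $D(b^*\|\hat b^{(t)})-D(b^*\|\hat b^{(t+1)})=\sum_i B_i\log B_i+\sum_{ij}b^*_{ij}\log v_{ij}-\sum_j p^*_j\log\tilde p^{(t)}_j-\sum_i B_i\log\tilde\nu^{(t)}_i$, where $p^*_j=\sum_i b^*_{ij}$. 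I would then use the market‑equilibrium identity $p^*_j=B_i v_{ij}/u^*_i$ whenever $b^*_{ij}>0$ (already isolated in \Cref{appendixPRProof}) to substitute $\sum_i B_i\log B_i+\sum_{ij}b^*_{ij}\log v_{ij}=\sum_j p^*_j\log p^*_j-\Phi(b^*)$, and replace $\tilde p^{(t)}_j,\tilde\nu^{(t)}_i$ by $\hat p^{(t)}_j,\hat u^{(t)}_i$ at the cost of logarithmic factors, invoking the multiplicative bounds of \Cref{defFPRupdate} together with $\hat\nu^{(t)}_i\le\hat u^{(t)}_i/(1-\epsilon_p)$. The outcome is
\begin{align*}
&D(b^*\|\hat b^{(t)})-D(b^*\|\hat b^{(t+1)})\\
&\qquad\ge\ \bigl(\Phi(\hat b^{(t)})-\Phi(b^*)\bigr)+\sum_j p^*_j\log\frac{p^*_j}{\hat p^{(t)}_j}-\log(1+\epsilon_\nu)-\log(1+\epsilon_p)+\log(1-\epsilon_p),
\end{align*}
which for $\epsilon_p=\epsilon_\nu=0$ is exactly the identity $D(b^*\|\hat b^{(t)})-D(b^*\|\hat b^{(t+1)})=\Phi(\hat b^{(t)})-\Phi(b^*)+D(p^*\|\hat p^{(t)})$ of \Cref{appendixPRProof}.

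Second I would bound the residual terms and sum over iterations. The FPR row sums equal $B_i$ at $t=0$ and equal $B_i\hat\nu^{(t-1)}_i/\tilde\nu^{(t-1)}_i\le B_i/(1-\epsilon_\nu)$ for $t\ge1$ --- crucially these deviations do \emph{not} compound with $t$, since $\tilde\nu^{(t-1)}$ is a multiplicative estimate of $\hat\nu^{(t-1)}$ rather than of a running quantity --- so $\|\hat b^{(t)}\|_1\le1/(1-\epsilon_\nu)$ and the log‑sum inequality gives $\sum_j p^*_j\log(p^*_j/\hat p^{(t)}_j)\ge\log(1-\epsilon_\nu)$. With $\log(1+x)\le x$ and $-\log(1-x)\le x+x^2$ on $[0,\tfrac12)$, the per‑step slack is then at most $2\epsilon_\nu+\epsilon_\nu^2+2\epsilon_p+\epsilon_p^2\le\tfrac52(\epsilon_\nu+\epsilon_p)$ (using $\epsilon_p,\epsilon_\nu<\tfrac12$). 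Summing the displayed inequality over $t=0,\dots,T-1$, the left side telescopes to $D(b^*\|\hat b^{(0)})-D(b^*\|\hat b^{(T)})\le\log m+\tfrac32\epsilon_\nu$, using $D(b^*\|\hat b^{(0)})\le\log m$ (the budgets sum to $1$ and the per‑buyer entropy terms are nonpositive) and $D(b^*\|\hat b^{(T)})\ge\log(1-\epsilon_\nu)$ by the same log‑sum bound; the right side is at least $T\bigl(\min_{t\in[T]}\Phi(\hat b^{(t)})-\Phi(b^*)\bigr)-\tfrac52T(\epsilon_\nu+\epsilon_p)$. Inserting $\epsilon_\nu\le\frac{\log m}{8T}$ and $\epsilon_p\le\frac{\log m}{6T}$ keeps the total error below $\log m$, so $T\bigl(\min_{t\in[T]}\Phi(\hat b^{(t)})-\Phi(b^*)\bigr)\le2\log m$, which is the claim.

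The main obstacle is the second step. In the exact proof $\hat b^{(t)}$ stays on the budget simplex and the term $D(p^*\|\hat p^{(t)})\ge0$ is a free gift; here the faulty iterate drifts off the simplex, so this term degrades to $\log(1-\epsilon_\nu)$, and one must verify that the drift stays $O(\epsilon_\nu)$ and --- importantly --- does not accumulate across iterations. Moreover the estimator $\tilde\nu^{(t)}_i$ reaches the true utility $\hat u^{(t)}_i$ appearing in $\Phi$ only through the two‑level chain $\hat u^{(t)}_i\mapsto\hat\nu^{(t)}_i\mapsto\tilde\nu^{(t)}_i$, so signs have to be tracked carefully and quadratic error terms absorbed (via $\epsilon_p,\epsilon_\nu<\tfrac12$) for the constants to fit the stated thresholds $\frac{\log m}{8T}$ and $\frac{\log m}{6T}$. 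Everything else is a faithful re‑run of the mirror‑descent / PR telescoping.
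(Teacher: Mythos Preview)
Your proposal is correct and follows essentially the same approach as the paper: both proofs expand $D(b^*\|\hat b^{(t)})-D(b^*\|\hat b^{(t+1)})$ via the FPR update rule, use the KKT identity at $b^*$ to extract $-\Phi(b^*)$, bound the off-simplex drift by observing that $\sum_j\hat b^{(t)}_{ij}=B_i\hat\nu^{(t-1)}_i/\tilde\nu^{(t-1)}_i$ stays within $[B_i/(1+\epsilon_\nu),B_i/(1-\epsilon_\nu)]$ without compounding, and telescope. The only cosmetic difference is that you first replace $\tilde p,\tilde\nu$ by $\hat p,\hat u$ and then apply the log-sum inequality to the residual $\sum_j p^*_j\log(p^*_j/\hat p^{(t)}_j)$, whereas the paper keeps the tilde quantities and introduces explicitly normalized versions to invoke nonnegativity of KL; your per-step slack $\tfrac52(\epsilon_\nu+\epsilon_p)$ versus the paper's $3\epsilon_p+3\epsilon_\nu$ both comfortably fit the stated thresholds.
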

A high-level idea of the proof follows from the telescoping sum trick to upper bound the EG objective functions with KL divergence from our proof of PR dynamics but with the consideration of error. We show an end-to-end proof of the convergence of the EG objective function in \cref{appendixFPRProof}.

Notice that in the FPR dynamics, we do not enforce the monotonicity of the iterations, but instead simply take the minimum value over all iterations. The error terms $\varepsilon_p$ and $\varepsilon_\nu$ in the FPR dynamics are only upper bounded such that the total sum of objective values over $T$ iterations (plus the original iteration) can be upper bounded by $\log m$ plus an accumulated error over $T$ iterations also within $\log m$. If we enforce the monotonicity of the iterations to take the last iteration, the error would require $\mathcal{O}(1/T^2)$ precision and would incur a further multiplicative overhead of $T$ in our algorithm.

However, given the formulation of a faulty update, a problem that comes into question is whether the computation of the exact value of the function $\Phi(b)$ is supported, as we do not compute $u_i$ in the process of updating. Without computation of $\Phi(b)$, one can not be sure which iteration of $\hat b^{(t)}$ is the minimum. On the other hand, we can use the computed value of $\tilde\nu_i^{(t)}$ as an estimator for the function $\Phi(b)$. The following result is then obtained.
\begin{restatable}{theorem}{esticonverge}
\label{theoremEsticonverge}
    Considering a linear Fisher market, for $b_{ij}^{(t)}$ as iteratively defined by the faulty proportional response dynamics where $\hat b_{ij}^{(0)} = \frac{B_i}{m}$. Let $t^* = \argmax_{t\in[T]} \sum_{i\in[n]} B_i \log \tilde\nu_i^{(t)}$. Then
    \begin{equation*}
    \textstyle \Phi(\hat b^{(t^*)}) - \Phi(b^*) \le \frac{2\log m}{T}
    \end{equation*}
    when $\varepsilon_\nu \le \frac{\log m}{8T}$ and $\varepsilon_p \le \frac{\log m}{6T}$. 
\end{restatable}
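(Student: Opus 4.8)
The plan is to view $\tilde\Phi(\hat b^{(t)}) := -\sum_{i\in[n]} B_i\log\tilde\nu_i^{(t)}$ as a computable surrogate for the true EG objective $\Phi(\hat b^{(t)}) = -\sum_{i\in[n]} B_i\log\hat u_i^{(t)}$, where $\hat u_i^{(t)} = \sum_{j\in[m]} v_{ij}\hat b_{ij}^{(t)}/\hat p_j^{(t)}$; note that the index $t^*$ singled out in the statement is precisely $\argmin_{t\in[T]}\tilde\Phi(\hat b^{(t)}) = \argmax_{t\in[T]}\sum_{i}B_i\log\tilde\nu_i^{(t)}$. The argument then has two ingredients: a \emph{uniform} closeness bound $\lvert\Phi(\hat b^{(t)})-\tilde\Phi(\hat b^{(t)})\rvert\le\eta$ with $\eta=\mathcal{O}(\epsilon_p+\epsilon_\nu)$ holding for every $t\in[T]$, and the elementary fact that the minimizer of a uniformly $\eta$-close surrogate is a $2\eta$-approximate minimizer of the original, combined with the convergence of $\min_t\Phi(\hat b^{(t)})$ supplied by \Cref{theoremFPRconverge}.

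For the closeness bound, I would compose the two multiplicative guarantees of \Cref{defFPRupdate}. Writing $\tilde p_j^{(t)}=\hat p_j^{(t)}(1+\delta_j)$ with $\lvert\delta_j\rvert\le\epsilon_p$ and summing the per-good terms shows that $\hat\nu_i^{(t)}=\sum_j v_{ij}\hat b_{ij}^{(t)}/\tilde p_j^{(t)}$ lies in $[\hat u_i^{(t)}/(1+\epsilon_p),\,\hat u_i^{(t)}/(1-\epsilon_p)]$; multiplying in the $(1\pm\epsilon_\nu)$ guarantee for $\tilde\nu_i^{(t)}$ gives $\tilde\nu_i^{(t)}/\hat u_i^{(t)}\in[\tfrac{1-\epsilon_\nu}{1+\epsilon_p},\tfrac{1+\epsilon_\nu}{1-\epsilon_p}]$. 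Taking logarithms, weighting by $B_i$, using $\sum_i B_i=1$, and applying $\log(1+x)\le x$ and $-\log(1-x)\le x/(1-x)$ for $x\in(0,1/2)$ yields $\lvert\Phi(\hat b^{(t)})-\tilde\Phi(\hat b^{(t)})\rvert\le\eta$ with $\eta\le 2(\epsilon_p+\epsilon_\nu)$, uniformly in $t$.

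With this in hand, let $t_{\min}=\argmin_{t\in[T]}\Phi(\hat b^{(t)})$; then
\begin{equation*}
\Phi(\hat b^{(t^*)})\;\le\;\tilde\Phi(\hat b^{(t^*)})+\eta\;\le\;\tilde\Phi(\hat b^{(t_{\min})})+\eta\;\le\;\Phi(\hat b^{(t_{\min})})+2\eta\;=\;\min_{t\in[T]}\Phi(\hat b^{(t)})+2\eta,
\end{equation*}
the middle inequality being the defining property of $t^*$. Subtracting $\Phi(b^*)$ and invoking \Cref{theoremFPRconverge} then reduces everything to checking that the $2\eta$ slack fits inside the stated budget.

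That last check is the main obstacle. The bare inequality $\min_t\Phi(\hat b^{(t)})-\Phi(b^*)\le\frac{2\log m}{T}$ of \Cref{theoremFPRconverge} is too weak to absorb an extra $2\eta=\mathcal{O}(\log m/T)$ term, so I would instead extract the sharper bound actually produced by the telescoping-sum argument in \Cref{appendixFPRProof}, of the shape $\min_t\Phi(\hat b^{(t)})-\Phi(b^*)\le\frac{\log m}{T}+\Theta(T\epsilon_p)+\Theta(T\epsilon_\nu)$ with explicit constants, and verify that these accumulated per-iteration errors together with $2\eta$ still sum to at most $\frac{\log m}{T}$ under $\epsilon_\nu\le\frac{\log m}{8T}$ and $\epsilon_p\le\frac{\log m}{6T}$ — the coefficients $1/8$ and $1/6$ being calibrated for exactly this. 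The cleanest route, and the one I would actually write up, is to re-run the telescoping argument of \Cref{appendixFPRProof} \emph{directly on the surrogate} $\tilde\Phi$ rather than on $\Phi$: since $\tilde\nu_i^{(t)}$ is already the quantity appearing in the bid update of \Cref{defFPRupdate}, this folds the $\Phi$-versus-$\tilde\Phi$ discrepancy into the error accounting already being carried out and avoids paying for the $\epsilon_\nu$ error twice, leaving only the one-sided loss $\Phi(\hat b^{(t^*)})\le\tilde\Phi(\hat b^{(t^*)})+\mathcal{O}(\epsilon_p)$ to be absorbed.
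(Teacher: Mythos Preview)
Your ``cleanest route'' at the end is exactly what the paper does: it reruns the telescoping argument of \Cref{appendixFPRProof} directly on the surrogate $\tilde\Phi(\hat b^{(t)})=-\sum_i B_i\log\tilde\nu_i^{(t)}$ (yielding $\min_t\tilde\Phi(\hat b^{(t)})-\Phi(b^*)\le\frac{\log m}{T}+3\epsilon_\nu+\epsilon_p$), and then applies the one-sided bound from \eqref{eqEGest} at $t=t^*$ to pass back to $\Phi$. One small correction: that one-sided loss is $\Phi(\hat b^{(t^*)})\le\tilde\Phi(\hat b^{(t^*)})+2\epsilon_p+\epsilon_\nu$, not $\mathcal{O}(\epsilon_p)$ alone, since $\tilde\nu_i$ differs from $\hat u_i$ through \emph{both} the $\tilde p_j\approx\hat p_j$ and $\tilde\nu_i\approx\hat\nu_i$ approximations; the resulting total $4\epsilon_\nu+3\epsilon_p$ is precisely what the constants $1/8$ and $1/6$ are tuned to absorb.
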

The proof of this theorem can similarly be found in \cref{appendixFPRProof}, which has the same proof idea as \cref{theoremFPRconverge} apart from some slight differences in error handling.

\section{Quantum algorithm}
We present our quantum algorithm for solving linear Fisher market equilibrium computation based on the FPR dynamics. Our quantum algorithm does not aim to provide speedups in terms of the number of iterations but provides speedups on the iteration cost of the PR dynamics algorithm. Our algorithm, while reducing the runtime in terms of the number of buyers $n$ or goods $m$, increases runtime in terms of the number of iterations $T$ but as the $T$ is logarithmically dependent on $m$, there is an overall quadratic speedup provided in the smaller of the two dimensions.

In this section, we further assume that $v \in \mathcal M_{n\times m} (\mathbb{I}_+)$. We note that the multiplicative scaling of $v_{ij}$ does not affect the bid matrix $b$ generated in the FPR dynamics as errors are multiplicative. Hence if the values are larger than $1$, we scale down the values by dividing the queried $v_{ij}$ by a number that is larger than $\max_{ij} v_{ij}$.

To compute the market equilibrium for the Fisher market by the FPR dynamics in the quantum setting, we require the data input of both the budget vector $B \in \mathbb{S}^n$ and the value matrix $v \in \mathcal M_{n\times m} (\mathbb{I}_+)$. We assume quantum query access to the budget and vector and value matrix by the index is readily given to us as part of the problem input without having to load classical data into a quantum system. That is, given an index state and ancilla quantum registers we can store the value of the budget and value according to the index in the ancilla register. Note that these operations can be performed in superposition, such that $\frac{1}{\sqrt{mn}}\sum_{i, j}\bm{\ket{i}}\bm{\ket{j}}\ket{\bar{0}} \rightarrow \frac{1}{\sqrt{mn}}\sum_{i, j}\bm{\ket{i}}\bm{\ket{j}}\ket{v_{ij}}$.  

We do not explicitly state how the data input of the budget and value entries are generated; they could be extracted from entries of a matrix already preloaded in QRAM, or
generated/reconstructed from a low-rank approximation of the matrix~\citep{kroer2022computing}, which takes $\mathcal O(k\poly\log mn)$ cost to access $k$-rank approximations using quantum arithmetic circuits, but with much lower memory consumption.\footnote{With low-rank approximations, loading classical data into QRAM would only take $\tilde{\mathcal{O}}(k (m+n))$ runtime.} We note that the low-rank approximation assumption of the value matrix has not yet been utilized to produce reductions in resource consumption in classical methods as the PR dynamics and other methods to compute market equilibrium~\citep{gao2020first} require all $mn$ entries of the full value matrix as input.

Storing the results of the computed bids $\hat b^{(t)} \in \mathcal M_{n\times m}(\mathbb{I}_+)$ in QRAM would require a cost of $\tilde{\mathcal O} (mn)$ which would remove all possibility of potential speedups. The same applies to the allocation matrix $\hat x^{(t)}$. Hence, every time we require the usage of $\hat b^{(t)}$ or $\hat x^{(t)}$, we compute them on-the-fly as follows:
\begin{equation}\label{eqOTF}
\textstyle\hat b_{ij}^{(t)} = \frac{B_i^{t+1}v_{ij}^{t}}{m\prod_{k=0}^{t-1}\tilde p_j^{(k)}\prod_{k=0}^{t-1}\tilde \nu_i^{(k)}}, \quad \hat x_{ij}^{(t)} = \frac{B_i^{t+1}v_{ij}^{t}}{m\prod_{k=0}^{t}\tilde p_j^{(k)}\prod_{k=0}^{t-1}\tilde \nu_i^{(k)}}
\end{equation}

Given quantum query access to the values of $(\Pi_{p}^{(t)})_j:= \prod_{k=0}^{t}\tilde p_j^{(k)}$ and $(\Pi_{\nu}^{(t)})_i:= \prod_{k=0}^{t}\tilde \nu_i^{(k)}$, one can encode the values of $\hat b^{(t+1)}$ and $ \hat x^{(t)}$ into a quantum state in superposition via quantum arithmetic circuits in runtime of $\mathcal O(\poly\log tmn)$. In our algorithm, we compute and store the vectors $\Pi_p^{(t)}$ and $\Pi_\nu^{(t)}$ in QRAM, after which, query access to such values cost $\mathcal O(\poly\log mn)$. Taking the $t$-th power of the budget and value cost $\mathcal O(\poly\log t)$ by using CNOT gates and binary exponentiation~\citep{montgomery1987speeding}.

The remaining steps are to compute the price vector $\tilde p$ and utility vector $\tilde u$ in each iteration. Each entry $\tilde p_j$ is the estimation of the $\ell_1$ norm of $\hat b_{*,j}$ and each entry $\tilde \nu_i$ is the estimation of the inner product between $\hat x_{i,*}$ and $v_{i,*}$, which can both be obtained using QAE. $\Pi_p^{(t)}$ and  $\Pi_\nu^{(t)}$ can then be iteratively updated by multiplying by the values of $\tilde p$ and $\tilde \nu$ each iteration. The full algorithm is shown in \cref{algoMain}, and the algorithmic guarantee is shown in \cref{theoremMain}.

\begin{algorithm}
\caption{Quantum algorithm for faulty proportional response dynamics}
\label{algoMain}
\Indm
\KwIn{Quantum query access to $B$ and $v$, Timestep limit $T$, Price error $\varepsilon_p$, Utility error $\varepsilon_\nu$}
\Indp
\DontPrintSemicolon
$\texttt{maxEGVal}= -\infty$, $b_{ij}^{(0)} = \frac{B_i}{m}$\;
\For{$t = 0$ to $T$}{
    \For{$j = 0$ to $m$}{
       $\tilde p_j^{(t)} = \|\hat b_{*, j}^{(t)}\|_1 (1\pm \varepsilon_p)$ via Q norm est. (\cref{lemmaNorm}) with success prob. $1-\frac{\delta}{2mT}$\;
    }
    Store vector $\Pi_p^{(t)} = \tilde p^{(t)} \odot \Pi_p^{(t-1)}$ into QRAM\;\label{alglineqram1}
    Gain access to $\hat x_{ij}^{(t)}$ via $\Pi_p^{(t)}$ and $\Pi_\nu^{(t-1)}$ in QRAM\;
    \For{$i = 0$ to $n$}{
       $\tilde \nu_i^{(t)} = \langle x_{i,*}^{(t)}, v_{i, *}\rangle(1\pm \varepsilon_\nu)$ via Q inner prod. est. (\cref{lemmaQInnerProduct}) with success prob. $1-\frac{\delta}{2nT}$\;
    }
    Store vector $\Pi_\nu^{(t)} = \tilde \nu^{(t)} \odot \Pi_\nu^{(t-1)}$ into QRAM\;\label{alglineqram2}
    Gain access to $\hat b_{ij}^{(t+1)}$ via $\Pi_p^{(t)}$ and $\Pi_\nu^{(t)}$ in QRAM\;
    Classically compute $\tilde{\Phi}^{(t)} = \sum_{i\in[n]} B_i \log(\tilde\nu_i^{(t)})$\;\label{alglinecompute}
    \If{$\tilde{\Phi}^{(t)} > \texttt{maxEGVal}$}{
        $\texttt{maxEGVal} = \tilde{\Phi}^{(t)},\,\texttt{bestPiP} = \Pi_p^{(t-1)},\,\texttt{bestPiNu} = \Pi_\nu^{(t-1)}$\;
    }
}
\Return \texttt{bestPiP} and \texttt{bestPiNu} in QRAM\;
\end{algorithm}

\begin{theorem}[Quantum algorithm for the faulty proportional response dynamics]\label{theoremMain}
Let $\delta \in (0, 0.5), n, m, T\in \mathbb N$, $\varepsilon_p = \frac{\log m}{8T}$, and $\varepsilon_\nu = \frac{\log m}{6T}$. Given quantum query access to $B$ and $v$, and access to QRAM, with success probability $1-\delta$, \cref{algoMain} produces values stored in QRAM such that query and sample access to the values of $\hat b^{(t^*)}$ can be constructed, where
\begin{equation*}
    \textstyle\Phi(\hat b^{(t^*)}) - \Phi(b^*) \le \frac{2\log m}{T},
    \end{equation*}
with $\tilde {\mathcal O}(T^2\sqrt{mn\max(m,n)}\log \frac{1}{\delta})$ runtime and $\tilde {\mathcal O}(m+n)$ QRAM space. To provide query access to $\hat b^{(t^*)}$, an additional cost of $\mathcal O(\poly\log Tmn)$ is incurred from accessing $\Pi_p^{(t^*-1)}$ and $\Pi_\nu^{(t^*-1)}$ in QRAM. Providing sample access to $\hat b^{(t^*)}$ requires additional cost of $\mathcal O(\sqrt{mn}\poly\log T)$ from QAA.
\end{theorem}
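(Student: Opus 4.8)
The plan is to separate the argument into a correctness part, which reduces entirely to \Cref{theoremEsticonverge}, and a resource-accounting part. For correctness I will show that, conditioned on all quantum subroutine calls succeeding, the sequence $\hat b^{(t)}$ that \Cref{algoMain} manipulates implicitly through the products $\Pi_p^{(t)},\Pi_\nu^{(t)}$ and \eqref{eqOTF} is exactly a faulty proportional response trajectory with parameters $(\epsilon_p,\epsilon_\nu)$ in the sense of \Cref{defFPRupdate}, that the index it records is the $t^*$ of \Cref{theoremEsticonverge}, and hence that $\Phi(\hat b^{(t^*)})-\Phi(b^*)\le\frac{2\log m}{T}$. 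For resources I will count queries to the input oracles, QRAM reads/writes, and arithmetic, and read the query-access cost to $\hat b^{(t^*)}$ off \eqref{eqOTF}.

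For correctness, the first step is to verify \eqref{eqOTF} by induction on $t$ from $\hat b^{(0)}_{ij}=B_i/m$: each faulty update multiplies the $(i,j)$ entry by $B_iv_{ij}/\tilde\nu_i^{(t)}$ and divides it by $\tilde p_j^{(t)}$, so after $t$ steps the accumulated numerators are $B_i^{t+1}v_{ij}^{t}$ and the denominators are $m\,(\Pi_p^{(t-1)})_j(\Pi_\nu^{(t-1)})_i$, and analogously for $\hat x^{(t)}$. Given QRAM access to $\Pi_p^{(t-1)},\Pi_\nu^{(t-1)}$ together with the oracles for $B,v$, this yields a unitary $O_b^{(t)}\colon\ket{i}\ket{j}\ket{\bar 0}\mapsto\ket{i}\ket{j}\ket{\hat b^{(t)}_{ij}}$ (and likewise $O_x^{(t)}$) using $\mathcal O(\poly\log Tmn)$ gates — one QRAM read each of $\Pi_p,\Pi_\nu$ (\Cref{defQRAM}), the $t$-th powers of $B_i,v_{ij}$ by binary exponentiation, and $\mathcal O(1)$ arithmetic. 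One checks that the vectors passed to the subroutines lie in a bounded multiple of the unit hypercube — each row of $\hat b^{(t)}$ sums to $B_i$ up to a bounded factor (as $t\epsilon_\nu\le\frac{1}{6}$), and $\hat x^{(t)}_{ij}=\hat b^{(t)}_{ij}/\tilde p^{(t)}_j<2$ — so after a constant rescale the norm/inner-product subroutines of \Cref{appendixQroutine} apply and return $\tilde p_j^{(t)}=(1\pm\epsilon_p)\|\hat b_{*,j}^{(t)}\|_1$ and $\tilde\nu_i^{(t)}=(1\pm\epsilon_\nu)\langle\hat x_{i,*}^{(t)},v_{i,*}\rangle$, which are precisely the multiplicative estimates of $\hat p_j^{(t)}$ and of $\sum_j v_{ij}\hat b^{(t)}_{ij}/\tilde p^{(t)}_j$ required by \Cref{defFPRupdate}. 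A union bound over the $\le mT$ norm estimations (each failing with probability $\le\delta/(2mT)$) and the $\le nT$ inner-product estimations (each failing with probability $\le\delta/(2nT)$) gives overall success probability $\ge 1-\delta$; on that event $\{\hat b^{(t)}\}$ is an FPR trajectory, the algorithm records $t^*=\argmax_t\sum_i B_i\log\tilde\nu_i^{(t)}$, the stated parameter bounds meet the hypotheses of \Cref{theoremEsticonverge}, and the EG bound follows. Finally the returned \texttt{bestPiP} and \texttt{bestPiNu} equal $\Pi_p^{(t^*-1)}$ and $\Pi_\nu^{(t^*-1)}$, which with \eqref{eqOTF} give query access to $\hat b^{(t^*)}$ at $\mathcal O(\poly\log Tmn)$ per entry.

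For the resources, combine the $\mathcal O(\poly\log Tmn)$ cost of one call to $O_b^{(t)}$ or $O_x^{(t)}$ with the complexity of \Cref{appendixQroutine}: one $\tilde p_j^{(t)}$ to multiplicative error $\epsilon_p$ with failure probability $\delta/(2mT)$ costs $\tilde{\mathcal O}(\sqrt n/\epsilon_p\cdot\log\frac{1}{\delta})$ oracle calls, and one $\tilde\nu_i^{(t)}$ costs $\tilde{\mathcal O}(\sqrt m/\epsilon_\nu\cdot\log\frac{1}{\delta})$. Summing $m$ of the former and $n$ of the latter per iteration, and taking $\epsilon_p,\epsilon_\nu$ at their permitted maxima so that $1/\epsilon_p,1/\epsilon_\nu$ are $\tilde{\mathcal O}(T)$, gives $\tilde{\mathcal O}(T(m\sqrt n+n\sqrt m)\log\frac{1}{\delta})$ per iteration; the QRAM writes of $\Pi_p^{(t)},\Pi_\nu^{(t)},\texttt{bestPiP},\texttt{bestPiNu}$ cost $\tilde{\mathcal O}(m+n)$ and the classical evaluation of $\tilde\Phi^{(t)}$ costs $\mathcal O(n)$, both lower order. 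Over the $\mathcal O(T)$ iterations the total is $\tilde{\mathcal O}(T^2(m\sqrt n+n\sqrt m)\log\frac{1}{\delta})=\tilde{\mathcal O}(T^2\sqrt{mn\max(m,n)}\log\frac{1}{\delta})$ since $m\sqrt n+n\sqrt m=\sqrt{mn}(\sqrt m+\sqrt n)=\Theta(\sqrt{mn\max(m,n)})$. For space, at any moment only a constant number of length-$m$ and length-$n$ vectors (the current and previous $\Pi_p,\Pi_\nu$ and \texttt{bestPiP},\texttt{bestPiNu}) reside in QRAM, each entry of $\poly\log$ size, i.e.\ $\tilde{\mathcal O}(m+n)$ QRAM; all other registers are $\mathcal O(\poly\log Tmn)$ qubits.

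The main obstacle is the correctness reduction: the oracle $O_b^{(t)}$ fed into amplitude estimation already hard-codes every earlier estimate $\tilde p^{(<t)},\tilde\nu^{(<t)}$, so one must argue this introduces no error beyond the per-step multiplicative $\epsilon_p,\epsilon_\nu$ that \Cref{defFPRupdate} and \Cref{theoremEsticonverge} are built to absorb — which holds precisely because the FPR dynamics is itself defined through this recursion, so $\hat b^{(t)}$ is by construction the correctly faulty-updated iterate and only the estimation at step $t$ adds new error. The remaining care points — the hypercube rescale, and retaining exactly enough $\Pi$-vectors to reconstruct $\hat b^{(t^*)}$ within $\tilde{\mathcal O}(m+n)$ QRAM — are routine.
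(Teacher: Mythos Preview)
Your proposal is correct and follows essentially the same approach as the paper's proof: a union bound over the $(m+n)T$ subroutine calls for the overall success probability, a reduction to \Cref{theoremEsticonverge} for the EG gap, and a direct accounting of the norm/inner-product estimation costs (using $1/\epsilon_p,1/\epsilon_\nu\in\tilde{\mathcal O}(T)$) plus the six stored $\Pi$-vectors for runtime and space. You are in fact more careful than the paper on points it leaves implicit---the inductive verification of \eqref{eqOTF}, the hypercube rescale, and the identity $m\sqrt n+n\sqrt m=\Theta(\sqrt{mn\max(m,n)})$; one harmless nit is that the row sums $\hat B_i^{(t)}$ do not accumulate error across iterations (\Cref{appendixFPRProof} gives $\hat B_i^{(t)}\in[B_i/(1+\epsilon_\nu),B_i/(1-\epsilon_\nu)]$ at every $t$), so your parenthetical ``$t\epsilon_\nu\le\tfrac16$'' is unneeded though the boundedness conclusion stands.
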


\begin{proof}
Per union bound~\citep{boole1847mathematical}, we find that the total success probability is at least $1-\delta$. The output of \cref{algoMain} of \texttt{bestPiP} and \texttt{bestPiNu} corresponds to the values of $\Pi_p^{(t^*-1)}$ and $\Pi_\nu^{(t^*-1)}$. This gives us the guarantee of convergence shown in \cref{theoremEsticonverge}.

For the runtime analysis, the quantum norm estimation subroutine (\cref{lemmaNorm}) takes $\mathcal{O}(\frac{\sqrt{n}}{\varepsilon_p}\log\frac{mT}{\delta})$ for $mT$ iterations, while quantum inner product (\cref{lemmaQInnerProduct}) estimation takes $\mathcal{O}(\frac{\sqrt{m}}{\varepsilon_\nu}\log\frac{nT}{\delta})$ for $nT$ iterations, resulting in a total runtime of $\tilde {\mathcal O}(T^2\sqrt{mn\max(m,n)}\log \frac{1}{\delta})$. For uses of QRAM, the construction on \cref{alglineqram1,alglineqram2} is a one-time cost of $\tilde{\mathcal O} (n)$ and $\tilde{\mathcal O} (m)$, respectively, with a total runtime of $\tilde{\mathcal O} (T(m+n))$. The classical computation of the EG value in \cref{alglinecompute} costs $\mathcal{O}(Tn)$. We note that the quantum norm and inner product estimation subroutine is the main bottleneck of the algorithm, and hence the total runtime is then $\tilde {\mathcal O}(T^2\sqrt{mn\max(m,n)}\log \frac{1}{\delta})$.

For the memory complexity, for the $t$-th iteration, we require 6 vectors in QRAM: the current iteration $\Pi_p^{(t)}$ and $\Pi_\nu^{(t)}$, the best iteration $\texttt{bestPiP}$ and $\texttt{bestPiNu}$ and the previous iteration $\Pi_p^{(t-1)}$ and $\Pi_\nu^{(t-1)}$, in case we need to update $\texttt{bestPiP}$ and $\texttt{bestPiNu}$. Note that to update the best iteration, we simply reroute the register of the previous iteration to being the best iteration. There is no need to copy data or reconstruct a new QRAM as the data from the previous iteration is no longer needed in the next iteration. Therefore, the memory is $\mathcal{O}(m + n)$ for storing the 6 vectors.

The values \texttt{bestPiP} and \texttt{bestPiNu} can be used to construct query access $\hat b^{(t^*)}$ per \cref{eqOTF}. To prepare quantum sample access to the matrix $\hat b^{(t^*)}$, using quantum query access to $\hat b^{(t^*)}$ from \cref{algoMain} and \cref{eqOTF} in addition to conditioned rotation gates, we can encode the values of $\hat b^{(t^*)}$ onto the amplitudes of an ancilla qubit via \cref{lemmaNorm} such that we obtain
\begin{equation}
\textstyle\frac{1}{\sqrt{mn}}\sum\limits_{i,j}\bm{\ket{i}}\bm{\ket{j}}\ket{0} \to \frac{1}{\sqrt{mn}} \sum\limits_{i,j}\bm{\ket{i}}\bm{\ket{j}}\Big(\sqrt{\hat b^{(t^*)}_{ij}}\ket{0} + \sqrt{1-\hat b^{(t^*)}_{ij}}\ket{1}\Big)
\end{equation}

Note that states where the last qubit is $\ket{0}$ are the ``good'' states that we want, while the states where the last qubit is $\ket{1}$ are ``bad'' states. To get rid of the ``bad'' states, we use QAA to amplify the amplitudes of the ``good'' states. Rewriting the above terms and ignoring the ancilla qubits, we obtain 
\begin{equation}
\textstyle\sqrt{\frac{\|\hat b^{(t^*)}\|_1}{mn}}\sum\limits_{i,j}\bm{\ket{i}}\bm{\ket{j}}\sqrt{\frac{\hat b^{(t^*)}_{ij}}{\|\hat b^{(t^*)}\|_1}}\ket{0} + \sqrt{1-\frac{\|\hat b^{(t^*)}\|_1}{mn}}\sum\limits_{i,j}\bm{\ket{i}}\bm{\ket{j}}\sqrt{\frac{1-\hat b^{(t^*)}_{ij}}{mn-\|\hat b^{(t^*)}\|_1}}\ket{1}
\end{equation}
We see from the above that the success probability of obtaining ``good'' states is $\frac{\|\hat b^{(t^*)}\|_1}{mn}$. Hence, QAA costs $\mathcal{O}\Big(\sqrt{\frac{mn}{\|\hat b^{(t^*)}\|_1}}\Big)$ query accesses to $b^{(t^*)}$, and while the exact value of $\|\hat b^{(t^*)}\|_1$ is not known, by \cref{defFPRupdate}, $\frac{1}{1+\varepsilon_\nu} \le \|\hat b^{(t^*)}\|_1 \le \frac{1}{1-\varepsilon_\nu}$. Hence, the query complexity of $b^{(t^*)}$ can be further upper bounded as $\mathcal{O}(\sqrt{mn(1+\varepsilon_\nu)})\subset\mathcal{O}(\sqrt{mn})$. On the other hand, the cost of query access to $b^{(t^*)}$ has $\mathcal O(\poly\log Tmn)$ cost, therefore, the total cost for sample access to $b^{(t^*)}$ is $\tilde{\mathcal O}(\sqrt{mn}\poly\log T)$.

\end{proof}

\section{Numerical simulations}
We simulate the market equilibrium computation under PR dynamics and our quantum algorithm.\footnote{The codebase can be found at \url{https://github.com/georgepwhuang/q-market-equilibrium}.} To showcase the effects of quantum speedups, we fixed the number of queries to all bid matrices $b^{(t)}$ and observed the reduction of the objective value over the number of queries.

As an actual simulation of QAE using quantum gates over multiple qubits is costly, we directly compute the probability vector of $\Pr[Z=z]$ for $z\in [M]$ that one would obtain by QAE~\citep{brassard2002quantum} for a target value $a$,
\begin{equation}
\textstyle\Pr[Z = z] = \frac{\sin^2(M\Delta_z \pi)}{M^2\sin^2(\Delta_z \pi)}
\end{equation}
where $\Delta_z = \min(\lvert z - \sin^{-1}(\sqrt{a})/\pi\rvert, \lvert 1- z + \sin^{-1}(\sqrt{a})/\pi\rvert)$. $M$ is the number of times that call the unitaries $U$ and $V$ in QAE (see \cref{theoremQAE}), and is linearly correlated to the runtime. We then sample the output according to the computed probabilities to obtain an estimator $\tilde a = \sin^2(\pi \frac{z}{M})$. 

For our experiments, we generate the input data $v$ where the value $v$ is sampled from a uniform distribution with range $(0, 1]$ and a normal distribution $\mathcal{N} (0.5, 0.25)$, where we resample values that fall outside the range of $(0, 1]$. For the budget $B$, we either sample from the same distribution as the value matrix or set the same budget for all buyers to simulate competitive equilibrium from equal income (CEEI) applications. Our simulation includes $n=16384$ buyers, $m=16384$ goods, and iterate for $T=16$ iterations for the PR dynamics. For the quantum algorithm, note that the queries per iteration would be reduced by $\sqrt{n}$ if we use an actual quantum computer, hence increasing the number of iterations to fix the number of queries. For QAE, we run for $\sqrt{T\sqrt{n}} = 512$ iterations and set $M \in \mathcal{O}(\sqrt{T\sqrt{n}})$. As the classical algorithms are deterministic, we rerun our quantum algorithm over $15$ times with the same sample of $B$ and $v$ to observe the variance of convergence progress. Results are shown in \cref{figExperiments}. Details on implementation and setup are found in \cref{appendixExp}.

\begin{figure}
    \centering
    \begin{subfigure}{\textwidth}
    \caption{}
    \includegraphics[width=\linewidth]{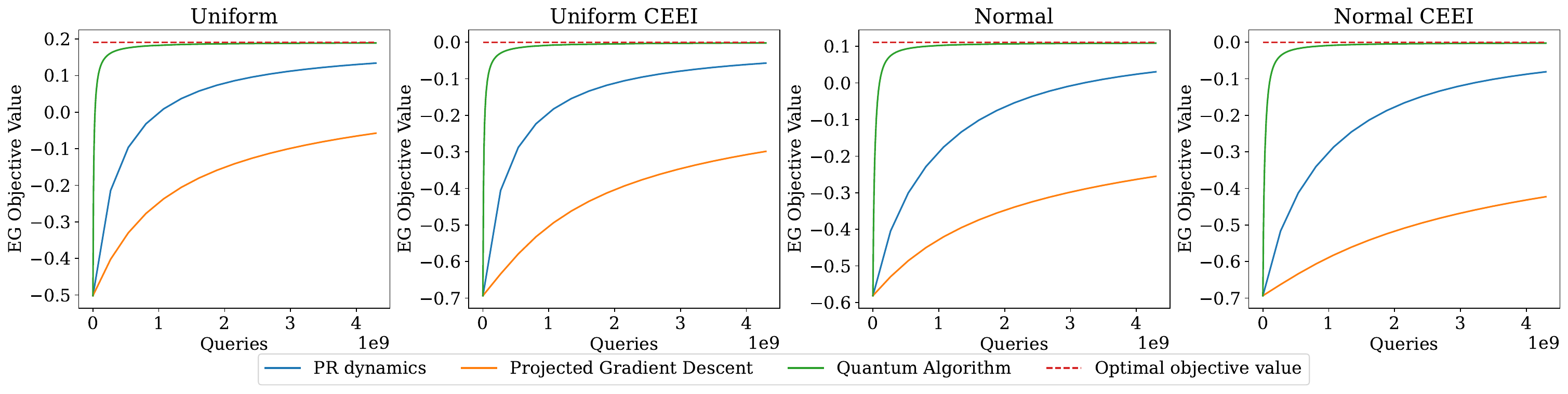}
    \label{figNumerics}
    \end{subfigure}
    \begin{subfigure}{\textwidth}
    \caption{}
    \includegraphics[width=\linewidth]{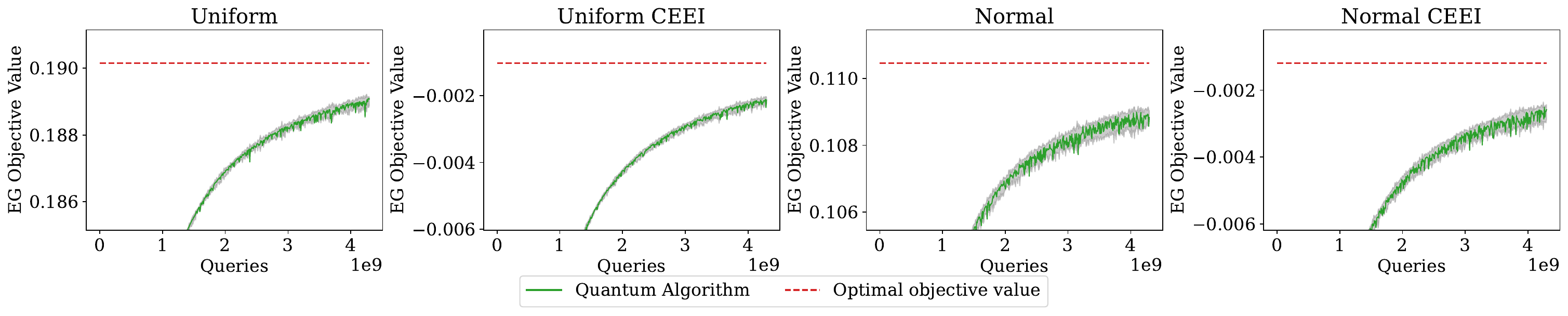}
    \label{figZoom}
    \end{subfigure}
    \caption{\textit{Experimental results.} We perform our experiments on $n=16384$ buyers and $m=16384$ goods given the same amount of queries for all algorithms. We observe in \cref{figNumerics} that over different distributions, our quantum algorithm (green) significantly outperforms the PR dynamics (blue), which aligns with our theoretical results. Furthermore, our results also show that both our quantum algorithm and the PR dynamics outperform projected gradient descent (orange) in the mid-accuracy regime. \cref{figZoom} shows the convergence of a single run of the quantum algorithm despite its instability from faulty updates, as well as the variance over the multiple runs (shaded in gray).}
    \label{figExperiments}
\end{figure}

From the plots of \cref{figExperiments}, we note that the results fit our theoretical results in that the quantum algorithm converges much faster than that of the PR dynamics~\citep{wu2007proportional}. Further, we also compare against the convergence of projected gradient descent, which supports empirical results by \citet{gao2020first} that in the regime of mid-level accuracy and low iterations, PR dynamics-related algorithms, both classical and quantum, converge faster than projected gradient descent.

\section{Discussion}\label{sectDiscussion}
\paragraph{Quasi-linear utilities.}
For the bulk of our paper, we focus on the setting of linear utilities for Fisher markets. However, applications of market equilibrium computation in large-scale Fisher markets involve mostly quasi-linear utilities~\citep{kroer2021market}.
An approach for using PR dynamics for quasi-linear utilities proposed by~\citet{gao2020first}\footnote{There is another method proposed by \citet{cheung2021learning}, which we find difficult to convert to quantum due to its use of thresholding, which would cause problems with faulty updates from the FPR dynamics.} includes the usage of slack variables $\delta = (\delta_1, \cdots, \delta_m)$ that represent the buyers' leftover budgets. The PR updates are then modified as follows:
\begin{equation}\label{eqQLPRupdate}
    \textstyle b_{ij}^{(t+1)} = B_i \frac{v_{ij}x_{ij}^{(t)}}{\sum_{j'}v_{ij'}x_{ij'}^{(t)}+ \delta_i^{(t)}},\; \delta_i^{(t+1)} = B_i \frac{\delta_i^{(t)}}{\sum_{j'}v_{ij'}x_{ij'}^{(t)}+ \delta_i^{(t)}}.
\end{equation}

Further, PR dynamics for quasi-linear utilities exhibit a convergence rate of $\mathcal{O}\big(\frac{\log (m+1)}{T}\big)$. Using the methods discussed in previous sections, the quasi-linear version of PR dynamics can then be readily adapted to its quantum version by employing the same techniques of computing and storing in QRAM the values $\Pi_p^{(t)}$ and $\Pi_\nu^{(t)}$ with on-the-fly computation of $\hat b^{(t+1)}$, $\hat x^{(t)}$ and $\hat \delta^{(t+1)}$.

\paragraph{Constant number of buyers.}
Notice that our quantum algorithm provides a quadratic speedup on the smaller value in regards to the number of buyers $n$ and number of goods $m$. Therefore given extreme cases where the number of buyers $n \in \mathcal{O}(1)$, our algorithm does not provide a speedup. However, in such cases, quantum speedups may still be obtained simply by removing the QAE step for estimating the price for each item and replacing it with using quantum arithmetic circuits to compute the exact sum. We use a total of $\mathcal O(nT\poly\log(T,m,n))$ qubits to compute the values of $b_{i,*}$ separately on the $i$-th set of qubits, and only conduct QAE when estimating the utility value for each buyer. Given that in this setting, $n \in \mathcal{O}(1)$, the total runtime would then be $\mathcal{O}\big(T^2 \sqrt{m}\log \frac{1}{\delta}\big)$, gaining a quadratic speedup over the number of goods $m$. 

\paragraph{Dequantization.}
Given the work in recent years towards the development of quantum-inspired classical algorithms~\citep{tang2019quantum, arrazola2020quantum, tang2021quantum, chia2022sampling} that achieve similar performances as quantum algorithms using sampling-based techniques, a natural question that arises is whether our algorithm can be ``de-quantized''. The main speedup in our algorithm stems from the usage of estimation of $\ell_1$ norms and inner products. While the use of sampling techniques can indeed provide inner product estimations, they retain the same $\mathcal{O} (1/\varepsilon^2)$ dependency instead of the $\mathcal{O} (1/\varepsilon)$ dependency of QAE. Hence, our algorithm performance may be hard to replicate in classical settings.

On the other hand, while it has been suggested that the computation of market equilibrium may benefit from low-rank approximations~\citep{kroer2022computing}, methods of using such properties to accelerate the computation of gradients have not been proposed, given that the update of the PR dynamics rely on element-wise multiplication of matrices instead matrix multiplication. This would suggest that using sampling techniques to accelerate updates would be similarly difficult.

\paragraph{Practicality.}
Our quantum algorithm relies mostly on the QAE subroutine to achieve quantum speedups. In its original formulation, QAE involves the usage of QPE as a subroutine, which requires multiple controlled unitaries and the quantum Fourier transform~\citep{coppersmith1994approximate}. QAE is thus regarded as a fault-tolerant quantum subroutine, whose execution may be beyond the capabilities of current quantum hardware. However, many subtle improvements to the QAE algorithm have since been made after its discovery, such as simplifying subroutines~\citep{suzuki2020amplitude, grinko2021iterative, nakaji2020faster, aaronson2020quantum, rall2023amplitude,labib2024quantum} or trading circuit depth with speedup factors~\citep{rall2023amplitude, giurgicatiron2022low,vu2024low}. Such improvements pave the way for potential implementation of QAE, and by extension, this paper, on next-generation quantum hardware in the early fault-tolerant regime~\citep{campbell2021early, katabarwa2024early}.

\paragraph{Potential and limitations for further quantum speedups.} Our quantum algorithm shares similarities to other quantum algorithms that are based on the \emph{multiplicative weight update} (MWU) method~\citep{arora2005fast, arora2012multiplicative}. Such methods have found success in obtaining quantum speedups for LPs~\citep{vanapeldoorn2019quantum} and SDPs~\citep{brandao2017quantum,brandao2019quantum,vanapeldoorn2019improvements,vanapeldoorn2020quantum}, which have been extended to applications such as zero-sum games~\citep{vanapeldoorn2019quantum, jain2022matrix}, quadratic binary optimization~\citep{brandao2022faster}, and financial applications~\citep{rebentrost2021quantum,lim2024quantum}. Apart from the MWU-esque PR dynamics, various other methods for computing market equilibrium have also been proposed. Can quantum speedups obtained from these methods exceed those of our quantum algorithm? 

Tracing back to the roots of the EG convex program~\citep{eisenberg1959consensus, eisenberg1961aggregation} and Shmyrev convex program~\citep{shmyrev2009algorithm}, it is well known that such programs can be solved in polynomial time with interior-point methods (IPM)~\citep{boyd2004interior}. However, as IPMs require using linear solvers as subroutines, and as there is no guarantee of well-conditioned systems, the quantum linear systems solver~\citep{harrow2009quantum, childs2017quantum} may not provide significant speedup. Therefore, it may be unlikely that quantum IPMs~\citep{kerenidis2020quantum} can provide significant speedups.

First-order methods such as the Frank-Wolfe (FW) algorithm~\citep{frank1956algorithm} and projected gradient descent (PGD) have also been discussed as candidates for solving market equilibrium~\citep{gao2020first}, with PGD achieving linear convergence classically. While PGD obtains a superior asymptotic convergence rate in terms of the error $\varepsilon$ compared to PR dynamics, as our quantum speedups stem from faster computations of results within a single iteration, it may be harder to find such speedups for PGD as there has been no evidence for quantum speedups in projections onto a simplex~\citep{duchi2008efficient, condat2015fast} as required. 

On the other hand, the FW algorithm has been shown to provide quantum speedups for regression~\citep{du2022quantum, chen2023quantum}. However, convergence results of FW \citep{clarkson2010coresets, jaggi2013revisiting} show that $\Phi(b^{(T)}) - \Phi(b^*) \le \frac{C_\Phi}{(T+2)}$, where $C_\Phi$ can be shown to be $\mathcal{O}(n)$ by computing relevant values.
The number of iterations $T$ required for convergence to additive error $\varepsilon$ is then $\mathcal{O}(\frac{n}{\varepsilon})$ as compared to $\mathcal{O}(\frac{\log m}{\varepsilon})$ of PR dynamics. This matches the results of \citet{gao2020first}, which show that FW has slow convergence empirically for market equilibrium computation. Prior no-go results suggest that quantum algorithms cannot provide speedups for $T$ when $T$ is independent of the problem dimension~\citep{garg2021no, garg2021optimal}. Assuming no quantum speedups in $T$, given the $\mathcal{O}(n)$ upper bound in the FW algorithm, the quantum algorithm based on FW can potentially have a higher dependency on $n$ than the classical PR dynamics. 

\begin{ack}
The authors thank Gregory Kang Ruey Lau, Jinge Bao, and Warut Suksompong for discussions. This work is supported by the National Research Foundation, Singapore, and A*STAR under its CQT Bridging Grant and its Quantum Engineering Programme under grant NRF2021-QEP2-02-P05.
\end{ack}

\nocite{apsrev42Control}
{\small 
\bibliographystyle{apsrev4-2} 
\bibliography{main}}

\newpage
\appendix

\counterwithin{theorem}{section}
\counterwithin{proposition}{section}
\counterwithin{lemma}{section}
\counterwithin{equation}{section}

\section{Quantum subroutines}
\label{appendixQroutine}
In this section, we show prior results that obtain $\ell_1$ norms and inner products with quadratic speedups with QAE.

\begin{lemma}[Quantum state preparation and norm estimation; Lemma 5,~\citep{rebentrost2021quantum}]\label{lemmaNorm}
Let $n \in \mathbb N$. We are given quantum query access to non-zero vector $w \in \mathbb{I}^n$, with $\max_j w_j = 1$.
\begin{enumerate}[leftmargin=*]
\item There exists a quantum circuit that prepares the state $\frac{1}{\sqrt{n}}  \sum_{j=1}^n \bm{\ket j}  \left( \sqrt{ w_j } \ket{0} + \sqrt{1- w_j} \ket{1} \right)$ with two queries and $\mathcal O \left(\log n\right)$ gates.
\item Let $\varepsilon>0$ and $\delta \in (0, 1)$. There exists a quantum algorithm that provides an estimate $\Gamma_w$ of the $\ell_1$-norm $\|w\|_1$ such that $\left| \|w\|_1 - \Gamma_w\right| \le \varepsilon \|w\|_1$, with probability at least $1-\delta$. The algorithm requires $\mathcal O\left(\frac{ \sqrt{n} }{\varepsilon} \log \frac{1}{\delta} \right)$ queries and $\tilde{\mathcal O}\left(\frac{ \sqrt{n} }{\varepsilon} \log \frac{1}{\delta} \right)$quantum gates.
 \end{enumerate}
\end{lemma}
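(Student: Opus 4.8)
The plan is to prove the two parts in sequence, using the state prepared in part~1 as the input to amplitude estimation in part~2. For part~1, I would work on three registers: an index register, a value register, and a single flag qubit, all initialized to $\ket{\bar 0}\ket{\bar 0}\ket 0$. Applying a layer of Hadamards (the standard uniform-superposition circuit) to the index register gives $\frac{1}{\sqrt n}\sum_{j=1}^n\ket j\ket{\bar 0}\ket 0$ in $\mathcal O(\log n)$ gates. One call to the query oracle produces $\frac{1}{\sqrt n}\sum_j\ket j\ket{w_j}\ket 0$. Then, controlled on the value register, I apply the single-qubit rotation that sends $\ket 0\mapsto\sqrt{w_j}\ket 0+\sqrt{1-w_j}\ket 1$ (angle $2\arccos\sqrt{w_j}$, computed and applied in the arithmetic model with $\mathcal O(\poly\log)$ gates, absorbed into the $\mathcal O(\log n)$ count). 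A second oracle call uncomputes the value register, leaving the target state $\frac{1}{\sqrt n}\sum_j\ket j(\sqrt{w_j}\ket 0+\sqrt{1-w_j}\ket 1)$ tensored with $\ket{\bar 0}$. That is two queries and $\mathcal O(\log n)$ gates, and crucially the preparation unitary $A$ (including the two oracle calls and the uncomputation) leaves no garbage entangled with the output, so reflections about this state are implementable.

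For part~2, let $\ket\psi$ denote the state of part~1 (ancillas included) and let $P=I\otimes\ketbra 0 0$ be the projector onto $\ket 0$ of the flag qubit. Since the entries of $w$ are nonnegative, $a:=\|P\ket\psi\|^2=\frac 1n\sum_j w_j=\frac{\|w\|_1}{n}$. I would invoke \Cref{theoremQAE} with $U=I-2\ketbra\psi\psi$, realized as $A(I-2\ketbra{\bar 0}{\bar 0})A^\dagger$ at the cost of two applications of the part-1 circuit plus a multi-controlled phase, and $V=I-2P$, which is a single-qubit reflection. Running $M$ QAE rounds yields an estimate $\tilde a$ with $|\tilde a-a|\le 2\pi\sqrt{a(1-a)}/M+\pi^2/M^2$ with probability at least $8/\pi^2$; I then return $\Gamma_w:=n\tilde a$.

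The one quantitative point that needs care is converting this additive guarantee into the multiplicative bound $|\Gamma_w-\|w\|_1|\le\epsilon\|w\|_1$. Here I would use the normalization $\max_j w_j=1$, which forces $\|w\|_1\ge 1$ and hence $a\ge 1/n$ (equivalently $\sqrt a\ge 1/\sqrt n$). Bounding $\sqrt{a(1-a)}\le\sqrt a$, one checks that $M=\mathcal O(\sqrt n/\epsilon)$ makes $2\pi\sqrt a/M\le\tfrac{\epsilon}{2}a$ (using $1/\sqrt a\le\sqrt n$) and $\pi^2/M^2\le\tfrac{\epsilon}{2}a$ (using $a\ge 1/n$ and $\epsilon\le 1$), so $|\tilde a-a|\le\epsilon a$ and therefore $|\Gamma_w-\|w\|_1|\le\epsilon\|w\|_1$. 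To upgrade the success probability from $8/\pi^2$ to $1-\delta$, I would run $\mathcal O(\log\tfrac 1\delta)$ independent copies and output the median: since more than half the estimates land in the interval $[(1-\epsilon)\|w\|_1,(1+\epsilon)\|w\|_1]$ except with probability $\delta$ by a Chernoff bound (as $8/\pi^2>1/2$), so does the median. Each QAE round costs one application of $U$ (two runs of the $\mathcal O(\log n)$-gate, two-query part-1 circuit) and one of $V$ ($\mathcal O(1)$ gates), giving $\mathcal O(\tfrac{\sqrt n}{\epsilon}\log\tfrac 1\delta)$ queries and $\tilde{\mathcal O}(\tfrac{\sqrt n}{\epsilon}\log\tfrac 1\delta)$ gates overall. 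The main obstacle is nothing deep — it is the error-conversion arithmetic and making sure the part-1 circuit is clean enough that $U=I-2\ketbra\psi\psi$ is exactly implementable; everything else is a direct invocation of \Cref{theoremQAE} plus median amplification.
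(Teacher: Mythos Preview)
Your proposal is correct and follows essentially the same approach as the paper: Hadamards, query, controlled rotation, uncompute for part~1; then amplitude estimation on the flag qubit with $a=\|w\|_1/n$, exploiting $\max_j w_j=1\Rightarrow a\ge 1/n$ to convert the additive QAE bound into a multiplicative one with $M=\mathcal O(\sqrt n/\epsilon)$, followed by median amplification (the paper cites this as the ``powering lemma''). The only cosmetic differences are that the paper fixes the explicit constant $M\ge 6\pi\sqrt n/\epsilon$ and chains the inequalities slightly differently to land at $|\tilde a-a|\le\tfrac{\epsilon}{2}a$.
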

\begin{proof}
We reiterate the proof of Lemma 5 in \citep{rebentrost2021quantum} for the convenience of the reader.
\begin{enumerate}[leftmargin=*]
\item First, using $\mathcal{O}(\log n)$ Hadamard gates, prepare the state $\frac{1}{\sqrt{n}}\sum_{j=1}^n \bm{\ket j}$. Then, by quantum query access to $w$, obtain $\frac{1}{\sqrt{n}}\sum_{j=1}^n \bm{\ket j}\ket{w_j}$. By controlled rotation gates, we can then obtain $\frac{1}{\sqrt{n}}\sum_{j=1}^n \bm{\ket j}\ket{w_j}(\sqrt{w_j}\ket{0}+\sqrt{1-w_j}\ket{1})$. By another quantum query access to $w$, we can uncompute the intermediate registers and obtain $\frac{1}{\sqrt{n}}\sum_{j=1}^n \bm{\ket j}(\sqrt{w_j}\ket{0}+\sqrt{1-w_j}\ket{1})$.
\item First observe that that with projector $P = I_n \otimes \ketbra{0}{0}$ and $\ket{\psi} = \frac{1}{\sqrt{n}}\sum_{j=1}^n \bm{\ket j}(\sqrt{w_j}\ket{0}+\sqrt{1-w_j}\ket{1})$, one can obtain $a = \|P\ket{\psi}\|_2^2 = \frac{\|w\|_1}{n}$. Setting $M\ge \frac{6\pi}{\varepsilon}\sqrt{n}$, we obtain an estimate
\begin{equation}
    |\tilde a_{\mathrm{est}} - a| \le  2\pi\frac{\sqrt{a(1-a)}}{M}+\frac{\pi^2}{M^2} \le \frac{\varepsilon}{6\sqrt{n}}\left(2\sqrt{a}+\frac{\varepsilon}{12}\right) \le \frac{3\sqrt{a}\varepsilon}{6\sqrt{n}} \le  \frac{\sqrt{\|w\|_1}\cdot\varepsilon}{2n} \le \frac{a}{2}\cdot\varepsilon
\end{equation}
with probability at least $\frac{8}{\pi^2}$. Using the powering lemma~\citep{jerrum1986random}, we can boost the success probability to $1-\delta$ by taking the median of $\mathcal O(\log \frac{1}{\delta})$ runs of the QAE algorithm.
\end{enumerate}
\end{proof}
\begin{remark}
Note that~\cref{lemmaNorm} has the requirement that $\max_j w_j =1$. For cases where this is not the case, we can use a maximum finding algorithm to divide all entries by the largest value. Such can be achieved by the following quantum minimum/maximum finding algorithm in $\mathcal{O}(\sqrt{n})$ runtime, which we introduce below. Recall that division takes $\mathcal{O}(1)$ runtime with quantum arithmetic circuits.
\end{remark}
\begin{lemma}[Quantum minimum finding; Theorem 1,~\citep{durr1996quantum}]
\label{lemmaQMin}
Let $n \in \mathbb N$. Given quantum query access to non-zero vector $w \in \mathbb{I}^n$, we can find the minimum $w_{\min} = \min_{j\in[n]}  w_j$ with success probability $1-\delta$ with $\mathcal O\left(\sqrt{n} \log \frac{1}{\delta} \right)$ queries and $\tilde{\mathcal O}\left(\sqrt{n} \log \frac{1}{\delta} \right) $quantum gates.
\end{lemma}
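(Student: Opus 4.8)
The plan is to invoke the standard quantum minimum-finding algorithm of Dürr and Høyer, which is itself built on Grover search (more precisely, on the bounded-error exponential quantum search for marked items). I would not reprove the entire result from scratch; instead I would describe the algorithm and cite the success-probability and query-complexity guarantees, then massage them into the form stated here (explicit $\log\frac1\delta$ dependence for confidence $1-\delta$).

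First I would set up the subroutine: given a threshold index $k\in[n]$, define the oracle that marks all indices $j$ with $w_j < w_k$. Quantum query access to $w$ lets us implement this marking oracle with $\mathcal{O}(1)$ queries and $\tilde{\mathcal O}(1)$ gates (one query to fetch $w_j$, a comparison against the stored value $w_k$, one query to uncompute). Then I would recall the Dürr–Høyer loop: start with a uniformly random index $k$; repeatedly run the exponential-search variant of Grover's algorithm (the BBHT routine) to find some marked index $j$ with $w_j<w_k$, update $k\leftarrow j$, and iterate. The core analysis of Dürr–Høyer shows that after $\mathcal{O}(\sqrt n)$ total Grover iterations (hence $\mathcal{O}(\sqrt n)$ queries and $\tilde{\mathcal O}(\sqrt n)$ gates), the running index $k$ equals $\argmin_j w_j$ with probability at least $1/2$.

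To get confidence $1-\delta$, I would apply the standard powering/amplification step: run the whole procedure $\mathcal{O}(\log\frac1\delta)$ independent times and return the index whose associated value $w_k$ is smallest among the runs. Since each run independently succeeds with probability $\ge 1/2$, the probability that \emph{every} run fails is at most $2^{-\Omega(\log(1/\delta))}\le \delta$, and whenever at least one run succeeds the reported value is exactly $w_{\min}$. This multiplies both the query count and gate count by $\mathcal{O}(\log\frac1\delta)$, yielding the claimed $\mathcal{O}(\sqrt n\log\frac1\delta)$ queries and $\tilde{\mathcal O}(\sqrt n\log\frac1\delta)$ gates. (Minimum versus maximum is immaterial: negate the comparison, or equivalently apply the result to $1-w$.)

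The main obstacle — really the only nontrivial point — is the internal analysis of the Dürr–Høyer loop, namely showing that the expected number of Grover iterations summed over all threshold updates is $\mathcal{O}(\sqrt n)$ despite not knowing in advance how many indices beat the current threshold. This is handled by the BBHT exponential-search schedule (geometrically increasing guesses for the number of marked items) together with a rank-based argument: if the current index has rank $r$ in sorted order, the next update lands on a roughly uniformly random index among those of smaller value, so the ranks decrease geometrically in expectation and $\sum_r \sqrt{n/r}$ telescopes to $\mathcal{O}(\sqrt n)$. Since this is exactly Theorem 1 of \citep{durr1996quantum}, I would simply cite it rather than reproduce it, and confine the written proof to the oracle construction and the amplification argument above.
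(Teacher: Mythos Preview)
Your proposal is correct and in fact supplies considerably more detail than the paper does: the paper does not prove this lemma at all but simply states it as a citation of Theorem~1 of D\"urr--H{\o}yer, so there is no ``paper's own proof'' to compare against. The sketch you give (BBHT exponential search inside the threshold-lowering loop, the rank-based $\sum_r\sqrt{n/r}=\mathcal{O}(\sqrt n)$ bound, and independent repetition $\mathcal{O}(\log\frac1\delta)$ times to boost the constant success probability) is exactly the standard argument behind the cited theorem, so nothing is missing.
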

\begin{corollary}[Quantum maximum finding]\label{corollaryQMax}
    Let $n \in \mathbb N$. Given quantum query access to non-zero vector $w \in \mathbb{I}^n$, we can find the maximum $w_{\max} = \max_{j\in[n]}  w_j$ with success probability $1-\delta$ with $\mathcal O\left(\sqrt{n} \log \frac{1}{\delta} \right)$ queries and $\tilde{\mathcal O}\left(\sqrt{n} \log \frac{1}{\delta} \right) $quantum gates.
\end{corollary}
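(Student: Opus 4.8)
The plan is to derive this as an immediate reduction to \Cref{lemmaQMin}, exploiting the fact that $w \in \mathbb{I}^n = [0,1]^n$. First I would define the complemented vector $w' \in \mathbb{I}^n$ by $w'_j := 1 - w_j$ for all $j \in [n]$; note $w'_j \in [0,1]$ so $w'$ is again a valid input for the minimum-finding lemma. The key observation is that the complement map is strictly order-reversing, so $\argmax_{j} w_j = \argmin_{j} w'_j$ and, correspondingly, $w_{\max} = 1 - w'_{\min}$.

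Next I would check that quantum query access to $w'$ is available at no asymptotic overhead: composing the given oracle for $w$ (which maps $\boldsymbol{\ket j}\ket{\bar 0} \mapsto \boldsymbol{\ket j}\ket{w_j}$) with a single $\mathcal{O}(1)$-gate quantum arithmetic subtraction circuit that computes $1 - w_j$ yields an oracle $\boldsymbol{\ket j}\ket{\bar 0} \mapsto \boldsymbol{\ket j}\ket{w'_j}$; the intermediate register can be uncomputed with a second query, exactly as in the state-preparation argument of \Cref{lemmaNorm}. Then I would invoke \Cref{lemmaQMin} on $w'$ to obtain $w'_{\min}$ with success probability $1-\delta$ using $\mathcal{O}(\sqrt{n}\log\tfrac1\delta)$ queries to the $w'$-oracle and $\tilde{\mathcal{O}}(\sqrt{n}\log\tfrac1\delta)$ quantum gates; since each $w'$-oracle call costs $\mathcal{O}(1)$ queries to the $w$-oracle and $\mathcal{O}(1)$ extra gates, the stated query and gate complexities are preserved. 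Finally, one more $\mathcal{O}(1)$-gate arithmetic operation returns $w_{\max} = 1 - w'_{\min}$.

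There is essentially no hard step here — the only point requiring a modicum of care is the oracle-composition bookkeeping (ensuring the complement circuit is reversible and its ancillas are cleaned up so that \Cref{lemmaQMin} can be applied verbatim). I would also remark in passing that an alternative reduction uses the negation $w \mapsto -w$ together with a signed-comparison version of Dürr–Høyer, but the complement within $\mathbb{I}^n$ is cleaner and keeps all registers inside the assumed nonnegative range.
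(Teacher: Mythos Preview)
Your proposal is correct, and it is essentially the obvious reduction the paper has in mind: the paper states \Cref{corollaryQMax} immediately after \Cref{lemmaQMin} with no proof, treating it as an immediate consequence of minimum finding. Your complement trick $w'_j = 1 - w_j$ (keeping values in $\mathbb{I}$) together with the $\mathcal{O}(1)$-overhead oracle composition is exactly the kind of one-line reduction implied by the ``corollary'' label.
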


Below we present a quantum inner product estimation algorithm simplified from Lemma 6 of \citep{rebentrost2021quantum}.
\begin{lemma} [Quantum inner product estimation with relative accuracy] \label{lemmaQInnerProduct}
Let $n \in \mathbb N$, $\varepsilon < 0$ and $\delta \in(0,1)$. We are given quantum query access to two vectors $u,v \in \mathbb{I}^n$.
An estimate $\Gamma_{u,v}$ for the inner product can be provided such that $|\Gamma_{u,v} - u\cdot v | \le \varepsilon\  u\cdot v$ with success probability $1-\delta$.
This estimate is obtained with $\mathcal O\left(\frac{ \sqrt{n} }{\varepsilon} \log \frac{1}{\delta} \right)$ queries and $\tilde{\mathcal O}\left(\frac{ \sqrt{n} }{\varepsilon} \log \frac{1}{\delta} \right) $quantum gates.
\end{lemma}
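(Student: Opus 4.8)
The plan is to reduce inner product estimation to the $\ell_1$-norm estimation of \Cref{lemmaNorm}. The key observation is that $u \cdot v = \sum_{j=1}^{n} u_j v_j = \|u \odot v\|_1$, and since $u, v \in \mathbb{I}^n$ the element-wise product $w := u \odot v$ also lies in $\mathbb{I}^n$. Thus a relative-accuracy estimate of $\|w\|_1$ is precisely a relative-accuracy estimate of $u \cdot v$, and the whole task is to supply \Cref{lemmaNorm} with the right input.

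First I would build quantum query access to $w$ from the given access to $u$ and $v$: on $\boldsymbol{\ket j}\ket{\bar 0}$, query $u$ and $v$ to obtain $\boldsymbol{\ket j}\ket{u_j}\ket{v_j}$, use one quantum arithmetic multiplication to write $\boldsymbol{\ket j}\ket{w_j}$ into a fresh register, then uncompute the $\ket{u_j},\ket{v_j}$ registers with a second pair of queries. This costs $\mathcal{O}(1)$ queries to $u,v$ and $\mathcal{O}(\log n)$ gates per query to $w$, so any algorithm making $Q$ queries to $w$ becomes one making $\mathcal{O}(Q)$ queries to $u,v$.

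The one hypothesis of \Cref{lemmaNorm} that is not automatic is $\max_j w_j = 1$; in general $\max_j w_j \le 1$ and may be tiny, which matters because the $\mathcal{O}(\sqrt{n}/\epsilon)$ query bound there relies on $\|w\|_1 \ge 1$. I would handle this exactly as in the remark following \Cref{lemmaNorm}: run quantum maximum finding (\Cref{corollaryQMax}) on the entries of $w$ to obtain $\mu := \max_j w_j$ with success probability $1-\delta/2$ in $\mathcal{O}(\sqrt{n}\log\frac{1}{\delta})$ queries and gates, form the rescaled vector $\tilde w := w/\mu$ (query access via one extra division, and now $\max_j \tilde w_j = 1$, hence $\|\tilde w\|_1 \ge 1$), apply the norm-estimation guarantee of \Cref{lemmaNorm} to $\tilde w$ with relative error $\epsilon$ and success probability $1-\delta/2$ to get $\Gamma_{\tilde w}$, and output $\Gamma_{u,v} := \mu\, \Gamma_{\tilde w}$. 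Since $u \cdot v = \mu\,\|\tilde w\|_1$, the bound $|\Gamma_{\tilde w} - \|\tilde w\|_1| \le \epsilon\,\|\tilde w\|_1$ rescales to $|\Gamma_{u,v} - u\cdot v| \le \epsilon\,(u\cdot v)$, and a union bound over the two subroutine failures gives overall success probability $1-\delta$. The degenerate case $u\cdot v = 0$ (so $\mu = 0$) is detected by the max-finding step and handled by returning $0$.

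The complexity is dominated by the norm-estimation call, $\mathcal{O}\!\left(\frac{\sqrt{n}}{\epsilon}\log\frac{1}{\delta}\right)$ queries and $\tilde{\mathcal{O}}\!\left(\frac{\sqrt{n}}{\epsilon}\log\frac{1}{\delta}\right)$ gates, with the max-finding step contributing only a lower-order $\mathcal{O}(\sqrt{n}\log\frac{1}{\delta})$. The step I expect to need the most care is the rescaling: one must check that dividing by $\mu$ preserves the \emph{multiplicative} error (it does, since the error is itself multiplicative) and, crucially, that it is the normalization $\|\tilde w\|_1 \ge 1$ that keeps the amplitude-estimation precision requirement at $M = \mathcal{O}(\sqrt{n}/\epsilon)$ rather than the $u\cdot v$-dependent $\mathcal{O}\!\left(\sqrt{n}/(\epsilon\sqrt{u\cdot v})\right)$ one would get from a naive encoding; everything else is a direct invocation of \Cref{lemmaNorm} and \Cref{corollaryQMax}.
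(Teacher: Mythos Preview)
Your proposal is correct and follows essentially the same route as the paper: form $z = u \odot v$ via quantum arithmetic, use quantum maximum finding (\Cref{corollaryQMax}) with success probability $1-\delta/2$ to rescale so that $\max_j z_j = 1$, apply the $\ell_1$-norm estimation of \Cref{lemmaNorm} with success probability $1-\delta/2$, and combine via a union bound. Your write-up is in fact more explicit than the paper's on the rescaling step, the degenerate $u\cdot v = 0$ case, and why the normalization keeps the query complexity at $\mathcal{O}(\sqrt{n}/\epsilon)$.
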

\begin{proof}
Using quantum arithmetic circuits, we can obtain $z_j = u_jv_j$, i.e., $z = u \odot v$, by the following:
\begin{equation}
\bm{\ket{j}} \to \bm{\ket{j}}\ket{u_j}\ket{v_j} \to \bm{\ket{j}}\ket{z_j}\ket{v_j} \to \bm{\ket{j}}\ket{z_j}\ket{\bar 0}
\end{equation}
 Using quantum maximum finding in \cref{corollaryQMax} to find $z_{\max}$ up to probability $1-\delta/2$, we can then obtain $z_j/z_{\max}$. Lastly, using \cref{lemmaNorm}, we can obtain $\Gamma_{u,v}  = \Gamma_z$ such that $|\Gamma_{u,v} -u\cdot v| = |\Gamma_z - \|z\|_1| \le \varepsilon u\cdot v$ up to probability $1-\delta/2$. Using a union bound~\citep{boole1847mathematical}, we find the total success probability of the entire process is $1-\delta$.
\end{proof}

\section{Convergence guarantees for the PR dynamics}
\label{appendixPRProof}

We show the convergence guarantee of the proportional response (PR) dynamics in regards to the Eisenberg-Gale convex program by~\citet{zhang2011proportional} and improved upon by~\citet{birnbaum2011distributed}, and the convergence in regards to the Shmyrev convex program, first shown also by~\citet{birnbaum2011distributed}.  Recall that the negative target function from the Eisenberg-Gale convex program is 
\begin{equation}
\Phi(b) = -\sum_{i \in [n]} B_i\log u_i,
\end{equation}
and the negative target function from the Shmyrev convex program is 
\begin{equation}
\Psi(b) = -\sum_{i \in [n], j\in[m]} b_{ij}\log\frac{v_{ij}}{p_j}.
\end{equation} 

We first set up the following convex set:
\begin{equation}
\mathcal{B} = \left\{b \in \mathcal M_{n\times m}(\mathbb R_+): \sum_j b_{ij} = B_i\right\}
\end{equation}

To show the convergence of the PR dynamics, we first need the following inequalities:
\begin{lemma}\label{lemmaShmEqEG}
Let $b^* = \argmin_{b\in \mathcal{B}} \Phi(b)$. Then $\Phi(b^*) = \Psi(b^*) - \sum_{i \in [n]}B_i\log B_i$.
\end{lemma}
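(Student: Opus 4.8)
The plan is to exploit the variational (KKT/duality) characterization that links the Eisenberg--Gale and Shmyrev programs at their common optimum. First I would recall that $b^*$, the minimizer of $\Phi$ over $\mathcal{B}$, induces an optimal price vector $p^*_j = \sum_i b^*_{ij}$, an optimal allocation $x^*_{ij} = b^*_{ij}/p^*_j$, and optimal utilities $u^*_i = \sum_j v_{ij} x^*_{ij}$. The key economic fact, which comes from the first-order optimality conditions of the EG program (equivalently, from the fact that $b^*$ is also the Shmyrev optimum, as established by \citet{shmyrev2009algorithm} and used implicitly above), is that at equilibrium each buyer spends only on goods maximizing bang-per-buck: $b^*_{ij} > 0$ implies $v_{ij}/p^*_j = \max_k v_{ik}/p^*_k =: \beta_i$, and moreover $\beta_i = u^*_i / B_i$. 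I would state this as the one nontrivial input and cite it (or derive it in a line from the Lagrangian of the EG program, where $\beta_i$ is the multiplier on buyer $i$'s utility constraint).

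With that in hand the computation is essentially bookkeeping. Starting from $\Psi(b^*) = \sum_{i,j} b^*_{ij} \log \frac{p^*_j}{v_{ij}}$, I would restrict the inner sum to the support of $b^*_{i,\cdot}$, on which $\frac{p^*_j}{v_{ij}} = 1/\beta_i = B_i/u^*_i$, a quantity independent of $j$. Pulling it out gives $\Psi(b^*) = \sum_i \big(\sum_j b^*_{ij}\big) \log \frac{B_i}{u^*_i} = \sum_i B_i \log B_i - \sum_i B_i \log u^*_i$, using $\sum_j b^*_{ij} = B_i$ from the constraint defining $\mathcal{B}$. Recognizing $-\sum_i B_i \log u^*_i = \Phi(b^*)$, we get $\Psi(b^*) = \Phi(b^*) + \sum_i B_i \log B_i$, which is the claim after rearranging.

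The main obstacle is purely in justifying the bang-per-buck identity $v_{ij}/p^*_j = B_i/u^*_i$ on the support of $b^*$ — everything else is one line of algebra. I would address it either by invoking the standard structure of Fisher market equilibria / the equivalence of the EG and Shmyrev optima already referenced in the preliminaries, or by writing the Lagrangian $\sum_i B_i \log u_i + \sum_i \beta_i (u_i - \sum_j v_{ij} b_{ij}/p_j) + \dots$ and reading off stationarity; one must also note $p^*_j > 0$ for all $j$ (each good has a buyer) so the logarithms are well defined. A minor care point is that the identity only holds on $\{j : b^*_{ij} > 0\}$, which is exactly why the restriction of the sum to the support is the right move and no error is introduced.
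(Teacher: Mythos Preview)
Your proof is correct and follows essentially the same approach as the paper: both invoke the KKT/bang-per-buck identity $\frac{B_i}{u_i^*} = \frac{p_j^*}{v_{ij}}$ at the optimum and then reduce the claim to a one-line algebraic manipulation (the paper starts from $\Phi(b^*)$ and expands, you start from $\Psi(b^*)$, but this is immaterial). If anything, you are slightly more careful than the paper in restricting the identity to the support $\{j : b^*_{ij} > 0\}$, whereas the paper states it ``$\forall i,j$''.
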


\begin{proof}
Recall that the price $p_j$ is the dual variable of the constraint on $x_{ij}$ for the Eisenberg-Gale convex program, and that $u_i = \sum_j x_{ij}v_{ij}$. By the KKT stationarity constraint, we see that 
\begin{equation}
 \frac{\partial}{\partial x_{ij}} \left(-\sum_iB_i \log u_i + \sum_j p_j\left(\sum_i x_{ij}-1\right)\right) = -\frac{B_iv_{ij}}{u_i} + p_j = 0
\end{equation}
from which we can infer that $\frac{B_iv_{ij}}{u_i^*} = p_j^*$
and show $\Phi(b^*) = \Psi(b^*)-\sum_{i \in [n]}B_i\log B_i$. We have that,
\begin{align}
\Phi(b^*) &= -\sum_{i \in [n]} B_i\log u_i^* = -\sum_{i \in [n], j\in[m]} b_{ij}^*\log u_i^* = -\sum_{i \in [n], j\in[m]} b_{ij}^*\log B_i \frac{v_{ij}}{p_j^*} \\ =&
-\sum_{i \in [n], j \in [m]} b_{ij}^* \log \frac{v_{ij}}{p_j^*} - \sum_{i \in [n]}B_i\log B_i
= \Psi(b^*) - \sum_{i \in [n]}B_i\log B_i.
\end{align}
\end{proof}

\begin{lemma}[Lemma 19, \citep{birnbaum2011distributed}]\label{lemmaEGLTShm}
$\forall b \in \mathcal{B}, \Phi(b) \le \Psi(b)-\sum_{i \in [n]} B_i \log B_i.$
\end{lemma}
\begin{proof}
We reiterate the proof of Lemma 19 in \citep{birnbaum2011distributed} for the convenience of the reader. By convexity of $-\log$, we see that,
\begin{align}
\Phi(b) &= -\sum_{i \in [n]} B_i\log u_i = -\sum_{i \in [n]} B_i\log \sum_{j \in [m]} \frac{b_{ij}}{p_j}v_{ij}\\
&= -\sum_{i \in [n]}B_i\log \sum_{j\in [m]} \frac{b_{ij}}{B_i}\frac{v_{ij}}{p_j} - \sum_{i \in [n]}B_i\log B_i\\
&\le -\sum_{i \in [n], j \in [m]}\frac{b_{ij}}{B_i}B_i\log \frac{v_{ij}}{p_j} - \sum_{i \in [n]}B_i\log B_i\\
&= \Psi(b) - \sum_{i \in [n]}B_i\log B_i.
\end{align}
\end{proof}

\begin{lemma}\label{lemmaShmLTEG}
Let $\displaystyle b_{ij}' = B_i\frac{b_{ij}v_{ij}/p_j}{u_i}$. Then $\forall b \in \mathcal{B}, \Psi(b') \le \Phi(b) + \sum_{i \in [n]} B_i \log B_i.$
\end{lemma}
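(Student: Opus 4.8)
The plan is to compute $\Psi(b')$ directly from the definition and simplify using the structure of the proportional response update $b_{ij}' = B_i \frac{b_{ij}v_{ij}/p_j(b)}{u_i(b)}$. First I would write out $\Psi(b') = -\sum_{i,j} b_{ij}' \log \frac{v_{ij}}{p_j(b')}$, where $p_j(b') = \sum_i b_{ij}'$. The key algebraic observation is that the logarithm splits: $\log \frac{v_{ij}}{p_j(b')} = \log \frac{v_{ij}}{p_j(b)} + \log \frac{p_j(b)}{p_j(b')}$, so $\Psi(b')$ breaks into two sums. For the first sum, substituting the update rule gives $-\sum_{i,j} B_i \frac{b_{ij}v_{ij}/p_j(b)}{u_i(b)} \log \frac{v_{ij}}{p_j(b)}$; I would like to recognize the inner structure as (something like) $\Phi(b)$ plus budget terms, by noting $\sum_j \frac{b_{ij}v_{ij}/p_j(b)}{u_i(b)} = 1$ since $u_i(b) = \sum_j v_{ij} b_{ij}/p_j(b)$, i.e. the update defines a probability distribution over goods for each buyer.

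The second step is to handle the term $-\sum_{i,j} b_{ij}' \log \frac{p_j(b)}{p_j(b')}$, equivalently $\sum_{i,j} b_{ij}' \log \frac{p_j(b')}{p_j(b)} = \sum_j p_j(b') \log \frac{p_j(b')}{p_j(b)}$, which is a KL-divergence-like quantity between the price vectors $p(b')$ and $p(b)$. Since both $p(b')$ and $p(b)$ have the same total mass $\sum_j p_j(b') = \sum_j p_j(b) = \sum_i B_i$ (bids always sum to budgets, and total budget is conserved), this is a genuine nonnegative KL divergence, hence $-\sum_{i,j} b_{ij}'\log\frac{p_j(b)}{p_j(b')} \ge 0$, so \emph{dropping} it gives an upper bound. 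Wait — I need the inequality to go the right way: $\Psi(b') \le \Phi(b) + \sum_i B_i \log B_i$ means I need to bound $\Psi(b')$ from \emph{above}, so I need the KL term to appear with a sign that lets me discard it to get an upper bound; since KL $\ge 0$ and it enters as $-(\text{KL})$... let me re-examine: $\Psi(b')$ equals [first sum] $- \sum_j p_j(b')\log\frac{p_j(b)}{p_j(b')} = $ [first sum] $+ D_{\mathrm{KL}}(p(b')\,\|\,p(b))$. That would give a \emph{lower} bound, which is wrong. So the correct move must be that the first sum, after using convexity of $-\log$ (as in \Cref{lemmaEGLTShm}), is bounded by $\Phi(b) + \sum_i B_i\log B_i$ minus exactly that KL term, and the two cancellations conspire correctly. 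I would therefore apply Jensen/convexity to the first sum using the weights $\frac{b_{ij}v_{ij}/p_j(b)}{u_i(b)}$, pulling the $\log$ outside, to get $\Phi$-type terms, and then track the price-ratio term carefully.

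More precisely, the cleanest route: write $\Psi(b') = -\sum_{i,j} b_{ij}'\log\frac{v_{ij}}{p_j(b')}$ and insert $u_i(b)$ and $B_i$ artificially: $\log\frac{v_{ij}}{p_j(b')} = \log\frac{v_{ij} b_{ij}}{p_j(b) u_i(b)} + \log\frac{p_j(b)u_i(b)}{b_{ij}} - \log p_j(b') $... this is getting complicated; instead I would mirror the proof of \Cref{lemmaEGLTShm} and \Cref{lemmaShmEqEG}: express $b_{ij}' / B_i$ as a probability weight $w_{ij}$ with $\sum_j w_{ij}=1$, use $\Psi(b') = \sum_i B_i \sum_j w_{ij}\log\frac{p_j(b')}{v_{ij}}$, split $\log p_j(b')$, and apply convexity of $\log$ (concavity) to $\sum_j w_{ij}\log p_j(b') \le \log \sum_j w_{ij} p_j(b')$ — no, I want an upper bound on $\Psi$ which has $+\log p_j(b')$, so I want an \emph{upper} bound on $\log p_j(b')$ terms, and Jensen on concave $\log$ gives exactly $\sum_j w_{ij}\log p_j(b') \le \log(\sum_j w_{ij}p_j(b'))$, but then I lose control of the cross terms.

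The honest assessment: the main obstacle is getting the direction of the convexity/Jensen step right and identifying which terms telescope versus which get bounded. I expect the correct proof to substitute the update rule, recognize $\sum_j \frac{b_{ij}v_{ij}/p_j(b)}{u_i(b)} = 1$, split the logarithm of $v_{ij}/p_j(b')$ as $\log(v_{ij}/p_j(b)) + \log(p_j(b)/p_j(b'))$, evaluate the first part to get exactly $\Phi(b) + \sum_i B_i\log B_i$ by the identity $\sum_j w_{ij}\log(v_{ij}/p_j(b)) $ combined with $-\log u_i(b)$-type bookkeeping (using $\log u_i(b) \ge \sum_j w_{ij}\log(v_{ij}/(p_j(b) w_{ij}) \cdot (b_{ij}/B_i) \cdot \ldots)$ — i.e. the weighted-AM-GM / Jensen inequality that is the content of \Cref{lemmaEGLTShm} applied in reverse), and then argue the remaining price-ratio sum $\sum_j p_j(b')\log(p_j(b)/p_j(b'))$ is $\le 0$ by $\log x \le x - 1$ together with mass conservation $\sum_j p_j(b) = \sum_j p_j(b')$, i.e. $\sum_j p_j(b')\log\frac{p_j(b)}{p_j(b')} \le \sum_j p_j(b')(\frac{p_j(b)}{p_j(b')} - 1) = \sum_j p_j(b) - \sum_j p_j(b') = 0$. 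Assembling these three pieces yields $\Psi(b') \le \Phi(b) + \sum_{i\in[n]} B_i\log B_i$, completing the proof. I would double-check the claim $p_j(b') = \sum_i b_{ij}'$ is not needed in closed form — only its total mass and the per-buyer normalization matter.
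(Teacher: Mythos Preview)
Your proposal has a genuine gap, and you actually spotted it yourself mid-stream (``That would give a \emph{lower} bound, which is wrong'') before your final paragraph reverted to the same broken plan.

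After the split $\log\frac{v_{ij}}{p_j(b')} = \log\frac{v_{ij}}{p_j(b)} + \log\frac{p_j(b)}{p_j(b')}$, you get
\[
\Psi(b') \;=\; \underbrace{\sum_{i,j} b_{ij}'\log\frac{p_j(b)}{v_{ij}}}_{\text{first sum}} \;+\; \underbrace{\sum_j p_j(b')\log\frac{p_j(b')}{p_j(b)}}_{=D(p(b')\,\|\,p(b))\;\ge\;0}.
\]
The price-ratio piece enters with a \emph{plus} sign, not minus, so proving it is $-$KL $\le 0$ does nothing for an upper bound on $\Psi(b')$. Your Jensen bound on the first sum is correct---with $w_{ij}=\frac{v_{ij}b_{ij}/p_j}{u_i}$ one gets $\sum_j w_{ij}\frac{p_j}{v_{ij}}=\frac{B_i}{u_i}$ and hence $\text{first sum}\le\Phi(b)+\sum_i B_i\log B_i$---but this cannot absorb the nonnegative KL term. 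And the first sum is \emph{not} equal to $\Phi(b)+\sum_i B_i\log B_i$ in general: take $n=1$, $m=2$, $B_1=1$, $v=(1,2)$, $b=(0.5,0.5)$ to see the two differ.

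The paper's proof avoids the split entirely: it applies Jensen with $p_j'/v_{ij}$ kept inside the logarithm, giving $\sum_j w_{ij}\frac{p_j'}{v_{ij}}=\frac{1}{u_i}\sum_j \frac{b_{ij}}{p_j}p_j'$, and then applies Jensen a \emph{second} time over $i$ with weights $B_i$ to collapse $\sum_i B_i\log\sum_j\frac{b_{ij}p_j'}{B_i p_j}\le\log\sum_{i,j}\frac{b_{ij}p_j'}{p_j}=\log\sum_j p_j'=0$, using the column identity $\sum_i b_{ij}=p_j$. The point is that $p_j'$ must stay inside the log until this column-sum cancellation can occur; splitting it off early, as you do, forfeits exactly that cancellation and leaves an unremovable $+D_{\mathrm{KL}}$.
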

\begin{proof}
 Let $p_j' = \sum_i b_{ij}'$. By concavity of $\log$:
\begin{align}
\Psi(b') &= \sum_{i \in [n], j \in [m]} b_{ij}' \log \frac{p_j'}{v_{ij}} = \sum_{i \in [n], j \in [m]} B_i\frac{b_{ij}v_{ij}/p_j}{u_i} \log \frac{p_j'}{v_{ij}}\\
&\le \sum_{i \in [n]} B_i \log \sum_{j\in[m]} \frac{b_{ij}v_{ij}/p_j}{u_i}\frac{p_j'}{v_{ij}} = \sum_{i \in [n]} B_i \log \left(\frac{1}{u_i}\sum_{j\in[m]} \frac{b_{ij}p_j'}{p_j}\right)\\
&= \sum_{i \in [n]} B_i \log \sum_{j\in[m]} \frac{b_{ij}p_j'}{B_ip_j} - \sum_{i \in [n]} B_i \log u_i + \sum_{i \in [n]} B_i \log B_i\\
&\le \log \sum_{i\in[n], j\in[m]} \frac{b_{ij}p_j'}{p_j} - \sum_{i \in [n]} B_i \log u_i + \sum_{i \in [n]} B_i \log B_i\\
&= \log \sum_{j\in[m]} p_j' - \sum_{i \in [n]} B_i \log u_i + \sum_{i \in [n]} B_i \log B_i\\
&= \Phi(b) + \sum_{i \in [n]} B_i \log B_i.
\end{align}
\end{proof}

From the above two lemmas, we gain the monotonically decreasing properties of iteratively updating $b$ via the PR dynamics on the negative target functions of the Eisenberg-Gale and Shmyrev convex programs:

\begin{lemma}\label{lemmaEGMonoDec}
$\forall t\ge0, \Phi(b^{(t+1)}) \le \Phi(b^{(t)})$.
\end{lemma}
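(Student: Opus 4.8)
The plan is to chain the two preceding lemmas, \Cref{lemmaEGLTShm} and \Cref{lemmaShmLTEG}, around a single PR update. Observe that the quantity $b'$ defined in \Cref{lemmaShmLTEG}, namely $b_{ij}' = B_i\frac{b_{ij}v_{ij}/p_j(b)}{u_i(b)}$, is exactly one step of the proportional response dynamics applied to $b$; so if we take $b = b^{(t)}$ then $b' = b^{(t+1)}$. Before invoking either lemma I would first note the membership $b^{(t)} \in \mathcal{B}$ for all $t$: this holds by induction, since $b^{(0)}_{ij} = B_i/m$ satisfies $\sum_j b^{(0)}_{ij} = B_i$, and the PR update preserves this because $\sum_j b_{ij}^{(t+1)} = \sum_j B_i v_{ij} x_{ij}^{(t)}/u_i^{(t)} = B_i u_i^{(t)}/u_i^{(t)} = B_i$.

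Next I would apply \Cref{lemmaShmLTEG} with $b = b^{(t)}$, which gives
\begin{equation*}
\Psi(b^{(t+1)}) \le \Phi(b^{(t)}) + \sum_{i \in [n]} B_i \log B_i,
\end{equation*}
and then apply \Cref{lemmaEGLTShm} with $b = b^{(t+1)}$ (valid since $b^{(t+1)} \in \mathcal{B}$), which gives
\begin{equation*}
\Phi(b^{(t+1)}) \le \Psi(b^{(t+1)}) - \sum_{i \in [n]} B_i \log B_i.
\end{equation*}
Combining the two displays, the $\sum_i B_i \log B_i$ terms cancel and we obtain $\Phi(b^{(t+1)}) \le \Phi(b^{(t)})$, as desired.

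There is essentially no hard step here: the lemma is a one-line corollary of the ``sandwich'' $\Phi(b) \le \Psi(b) - \sum_i B_i\log B_i$ (\Cref{lemmaEGLTShm}) together with the ``one-step improvement'' $\Psi(b') \le \Phi(b) + \sum_i B_i\log B_i$ (\Cref{lemmaShmLTEG}). The only point requiring a moment's care is confirming that the PR iterate stays in the feasible set $\mathcal{B}$ so that both lemmas are applicable at the relevant points, which is the short induction above. (One could also remark that the same argument yields the chain $\Psi(b^{(t+1)}) \le \Phi(b^{(t)}) + \sum_i B_i\log B_i \le \Psi(b^{(t)})$, i.e. monotone decrease of $\Psi$ as well, matching the two ``notable results'' mentioned after \Cref{theoremPRconverge}.)
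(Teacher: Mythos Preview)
Your proof is correct and matches the paper's approach exactly: the paper's proof is the single line ``Apply \Cref{lemmaEGLTShm} and \Cref{lemmaShmLTEG} consequently,'' which is precisely the chaining you carry out. Your additional check that the PR iterate stays in $\mathcal{B}$ is a nice bit of hygiene the paper leaves implicit.
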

\begin{proof}
Apply \cref{lemmaEGLTShm} and \cref{lemmaShmLTEG} consequently.
\end{proof}

\begin{corollary}
[Lemma 5, \citep{birnbaum2011distributed}]\label{corollaryShmMonoDec}
$\forall t\ge0, \Psi(b^{(t+1)}) \le \Psi(b^{(t)})$.
\end{corollary}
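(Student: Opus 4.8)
The plan is to mirror the proof of \Cref{lemmaEGMonoDec}, but to chain the two sandwich inequalities in the opposite order. The key observation is that the bid matrix $b^{(t+1)}$ obtained from one step of the PR dynamics applied to $b^{(t)}$ is exactly the matrix $b'$ appearing in \Cref{lemmaShmLTEG} with $b = b^{(t)}$: indeed $b^{(t+1)}_{ij} = B_i v_{ij} x_{ij}^{(t)} / u_i^{(t)} = B_i \bigl(b_{ij}^{(t)} v_{ij}/p_j(b^{(t)})\bigr)/u_i(b^{(t)})$. Moreover the PR update preserves the row sums (each stays equal to $B_i$), so $b^{(t)} \in \mathcal{B}$ for every $t$ by induction from $b^{(0)}_{ij} = B_i/m$; this feasibility is the hypothesis that both \Cref{lemmaEGLTShm} and \Cref{lemmaShmLTEG} require.

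Given this, I would first apply \Cref{lemmaShmLTEG} with $b = b^{(t)}$ (so that $b' = b^{(t+1)}$) to obtain
\begin{equation*}
\Psi(b^{(t+1)}) \le \Phi(b^{(t)}) + \sum_{i \in [n]} B_i \log B_i .
\end{equation*}
Then I would apply \Cref{lemmaEGLTShm} to $b^{(t)}$ to obtain $\Phi(b^{(t)}) \le \Psi(b^{(t)}) - \sum_{i \in [n]} B_i \log B_i$, i.e. $\Phi(b^{(t)}) + \sum_{i\in[n]} B_i \log B_i \le \Psi(b^{(t)})$. Substituting the latter into the former, the $\sum_{i\in[n]} B_i \log B_i$ terms cancel and we are left with $\Psi(b^{(t+1)}) \le \Psi(b^{(t)})$, which is the claim.

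There is essentially no obstacle beyond the two lemmas already established; the only care needed is bookkeeping of the additive constant $\sum_{i\in[n]} B_i \log B_i$ (which is negative, since $B \in \mathbb{S}^n$), so one should carry its sign through rather than discard it, together with the routine check that the iterates stay in $\mathcal{B}$. I note that this argument simultaneously re-derives the chain $\Psi(b^{(t+1)}) \le \Phi(b^{(t)}) + \sum_{i\in[n]} B_i \log B_i \le \Psi(b^{(t)})$ highlighted in the main text as one of the ``two notable results'' reused later in the paper.
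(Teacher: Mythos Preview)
Your proposal is correct and is exactly the argument the paper has in mind: the corollary carries no separate proof because it follows from the same two sandwich inequalities (\Cref{lemmaEGLTShm} and \Cref{lemmaShmLTEG}) as \Cref{lemmaEGMonoDec}, merely chained in the opposite order. Your additional bookkeeping---verifying $b^{(t)}\in\mathcal{B}$ and tracking the additive constant $\sum_{i\in[n]} B_i\log B_i$---is accurate and fills in the routine details the paper omits.
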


We now use the following lemmas to construct an end-to-end proof of the convergence of the PR dynamics. In a slight abuse of notation, we adapt the definition of KL divergence to matrices such that for $u, v \in \mathcal M(\mathbb R_+)_{p\times q}$, let $D(u\|v):= \sum_{i\in[p], j\in[q]} u_{ij}\log\frac{u_{ij}}{v_{ij}}$. The following can then be shown:

\begin{lemma}\label{lemmaEGtelescope}
$\forall t\ge0, \sum_{t=0}^T \left(\Phi(b^{(t)}) - \Phi(b^*)\right) \le D(b^*\|b^{(0)})$.
\end{lemma}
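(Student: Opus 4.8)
The entire statement reduces to a single per-iteration inequality:
\begin{equation*}
D(b^*\|b^{(t)}) - D(b^*\|b^{(t+1)}) \;\ge\; \Phi(b^{(t)}) - \Phi(b^*) \qquad \text{for every } t\ge 0 .
\end{equation*}
Granting this, the plan is to sum it over $t=0,\dots,T$: the left-hand side telescopes to $D(b^*\|b^{(0)}) - D(b^*\|b^{(T+1)})$, and since every iterate $b^{(t)}$ (as well as $b^*$) lies in $\mathcal{B}$ and thus has row sums $B_i$, the term $D(b^*\|b^{(T+1)})$ is a KL divergence between nonnegative matrices of equal total mass $\sum_i B_i$, hence $\ge 0$ by the log-sum inequality. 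This yields $\sum_{t=0}^T\big(\Phi(b^{(t)}) - \Phi(b^*)\big) \le D(b^*\|b^{(0)})$, the claim. So all the work is in the per-step inequality.

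To prove the per-step inequality I would expand the two divergences straight from the definition, so the $b^*_{ij}\log b^*_{ij}$ terms cancel:
\begin{equation*}
D(b^*\|b^{(t)}) - D(b^*\|b^{(t+1)}) = \sum_{i\in[n],\,j\in[m]} b^*_{ij}\,\log\frac{b^{(t+1)}_{ij}}{b^{(t)}_{ij}} .
\end{equation*}
Substituting the PR update $b^{(t+1)}_{ij} = B_i\, b^{(t)}_{ij}\, v_{ij}/(p^{(t)}_j u^{(t)}_i)$ gives the ratio $b^{(t+1)}_{ij}/b^{(t)}_{ij} = B_i v_{ij}/(p^{(t)}_j u^{(t)}_i)$, so splitting the logarithm and using $\sum_j b^*_{ij} = B_i$ together with $\sum_i B_i\log u^{(t)}_i = -\Phi(b^{(t)})$ produces
\begin{equation*}
D(b^*\|b^{(t)}) - D(b^*\|b^{(t+1)}) = \sum_{i\in[n]} B_i\log B_i \;+\; \Phi(b^{(t)}) \;+\; \sum_{i\in[n],\,j\in[m]} b^*_{ij}\log\frac{v_{ij}}{p^{(t)}_j} .
\end{equation*}
It then remains to bound the residual sum below by $-\Psi(b^*)$. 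The plan is to write $v_{ij}/p^{(t)}_j = (v_{ij}/p^*_j)\cdot(p^*_j/p^{(t)}_j)$: the first factor contributes $\sum_{i,j} b^*_{ij}\log(v_{ij}/p^*_j) = -\Psi(b^*)$ by the definition of $\Psi$, and the second contributes $\sum_j p^*_j\log(p^*_j/p^{(t)}_j)$ after using $\sum_i b^*_{ij} = p^*_j$, which is $\ge 0$ since $p^*$ and $p^{(t)}$ have the same total mass $\sum_i B_i$. Combining this with \Cref{lemmaShmEqEG}, which rearranges to $\sum_i B_i\log B_i - \Psi(b^*) = -\Phi(b^*)$, turns the displayed identity into $D(b^*\|b^{(t)}) - D(b^*\|b^{(t+1)}) \ge \Phi(b^{(t)}) - \Phi(b^*)$, as needed.

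The only delicate point is the bookkeeping in the expansion: tracking which partial sums collapse to $\Phi(b^{(t)})$ versus $\sum_i B_i\log B_i$, and recognizing the leftover term as an (unnormalized) KL divergence between the price vectors $p^{(t)}$ and $p^*$. One should also record that $b^{(0)}_{ij}=B_i/m>0$ and that the PR update preserves both strict positivity and the row-sum constraint, so every $p^{(t)}_j$, every logarithm, and every $x^{(t)}_{ij}=b^{(t)}_{ij}/p^{(t)}_j$ is well defined, while entries with $b^*_{ij}=0$ simply drop out of the sums. An alternative route for the residual term invokes convexity of $\Psi$ and its first-order inequality at $b^{(t)}$ (the affine ``$+1$'' terms cancel because $b^*$ and $b^{(t)}$ have equal total mass), but the price-vector KL argument is shorter.
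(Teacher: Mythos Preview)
Your proposal is correct and follows essentially the same approach as the paper: expand $D(b^*\|b^{(t)})-D(b^*\|b^{(t+1)})$ via the PR update ratio, split the logarithm to extract $-\Psi(b^*)$, $\Phi(b^{(t)})$, $\sum_i B_i\log B_i$, and the price KL term $D(p^*\|p^{(t)})\ge 0$, then invoke \Cref{lemmaShmEqEG} and telescope. The only cosmetic difference is the order in which you group the log factors; your added remarks on well-definedness are a welcome bonus.
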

\begin{proof}
Similar by the proof of Theorem 3 of \citep{zhang2011proportional}, we first lower bound $\Delta_t = D(b^*\|b^{(t)}) - D(b^*\|b^{(t+1)})$ as follows:
\begin{align}
    \Delta_t &= D(b^*\|b^{(t)}) - D(b^*\|b^{(t+1)})\\
    &= \sum_{i\in[n], j \in [m]} b_{ij}^*\log \frac{b^{(t+1)}_{ij}}{b^{(t)}_{ij}}= \sum_{i\in[n], j \in [m]} b_{ij}^*\log \frac{B_iv_{ij}}{p_j^{(t)}u_i^{(t)}}\\
    &= \sum_{i\in[n], j \in [m]} \left( b_{ij}^*\log \frac{v_{ij}}{p_j^*} +  b_{ij}^*\log \frac{p_j^*}{p_j^{(t)}} - b_{ij}^*\log u_i^{(t)} +  b_{ij}^*\log B_i\right)\\
    &= \sum_{i\in[n], j \in [m]} b_{ij}^*\log \frac{v_{ij}}{p_j^*} + \sum_{j \in [m]} p_j^*\log \frac{p_j^*}{p_j^{(t)}} - \sum_{i\in[n]} B_i\log u_i^{(t)} + \sum_{i\in[n]} B_i\log B_i\\ 
    &= -\Psi(b^*) + D(p_j^*\|p_j^{(t)}) + \Phi(b^{(t)}) + \sum_{i\in[n]} B_i\log B_i\\
    &=  D(p_j^*\|p_j^{(t)}) + \Phi(b^{(t)}) - \Phi(b^*)\\
    &\ge \Phi(b^{(t)}) - \Phi(b^*)
\end{align}
where the second-to-last equality is by \cref{lemmaShmEqEG} and the inequality is by the positivity of KL divergence.
Taking the telescoping sum of $\Delta_t$, we see that
\begin{equation}
\sum_{t=0}^T \Delta_t = \sum_{t=0}^T D(b^*\|b^{(t)}) - D(b^*\|b^{(t+1)}) = D(b^*\|b^{(0)}) - D(b^*\|b^{(T+1)}) \ge \sum_{t=0}^T (\Phi(b^{(t)}) - \Phi(b^*))
\end{equation}
Hence, we obtain $\sum_{t=0}^T \left(\Phi(b^{(t)}) - \Phi(b^*)\right) \le D(b^*\|b^{(0)})$. 
\end{proof}

\begin{proposition}\label{propositionEGConverge}
$\forall t\ge0, \Phi(b^{(T-1)})-\Phi(b^*) \le \frac{D(b^{(0)}\|b^*)}{T}$.
\end{proposition}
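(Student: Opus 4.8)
The plan is to combine the telescoping bound of \Cref{lemmaEGtelescope} with the monotonicity of \Cref{lemmaEGMonoDec}. First I would observe that each iterate $b^{(t)}$ produced by the PR dynamics lies in the feasible set $\mathcal{B}$ (the update rescales the bids of buyer $i$ so that they sum to $B_i$), and since $b^* = \argmin_{b\in\mathcal{B}}\Phi(b)$, every summand $\Phi(b^{(t)}) - \Phi(b^*)$ is nonnegative. This is what lets a bound on the sum be turned into a bound on a single term.

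Next I would apply \Cref{lemmaEGtelescope} with $T-1$ in place of $T$, giving $\sum_{t=0}^{T-1}\bigl(\Phi(b^{(t)}) - \Phi(b^*)\bigr) \le D(b^{*}\|b^{(0)})$. By \Cref{lemmaEGMonoDec} the sequence $\{\Phi(b^{(t)})\}_t$ is nonincreasing, so $\Phi(b^{(T-1)}) \le \Phi(b^{(t)})$ for every $t \in \{0,\dots,T-1\}$, and therefore $\Phi(b^{(T-1)}) - \Phi(b^*)$ is the smallest of the $T$ nonnegative terms in that sum. Consequently $T\bigl(\Phi(b^{(T-1)}) - \Phi(b^*)\bigr) \le \sum_{t=0}^{T-1}\bigl(\Phi(b^{(t)}) - \Phi(b^*)\bigr) \le D(b^{*}\|b^{(0)})$, and dividing by $T$ yields the claim (I would match the direction of the divergence to the statement; the content is $D$ evaluated at $b^{(0)}$ and $b^*$, which in the subsequent step is bounded by $\log m$ using $b^{(0)}_{ij} = B_i/m$ and $B\in\mathbb{S}^n$).

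I do not expect a genuine obstacle here: the proposition is essentially the standard "average-to-last-iterate" argument for a monotone sequence, and all the analytic work has already been done in \Cref{lemmaEGtelescope}. The only points requiring care are purely bookkeeping: shifting the index range so that the denominator is $T$ rather than $T+1$, and confirming feasibility of the iterates so that nonnegativity of the summands (hence the monotonicity-to-last-term reduction) is justified.
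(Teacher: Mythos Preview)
Your proposal is correct and matches the paper's proof essentially line for line: combine \Cref{lemmaEGMonoDec} with \Cref{lemmaEGtelescope} (applied with $T-1$ in place of $T$) and average. Your parenthetical about the divergence arguments is also apt---the bound coming out of \Cref{lemmaEGtelescope} is $D(b^*\|b^{(0)})$, and the $D(b^{(0)}\|b^*)$ in the statement is a typo.
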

\begin{proof}
Combining \cref{lemmaEGMonoDec} and \cref{lemmaEGtelescope}, we can write
\begin{equation}
\Phi(b^{(T-1)}) - \Phi(b^*) \le \frac{1}{T} \sum_{t=0}^{T-1}\Phi(b^{(t)}) - \Phi(b^*) \le \frac{D(b^*\|b^{(0)})}{T}.
\end{equation}
\end{proof}

\begin{corollary}[Lemma 3, \citep{birnbaum2011distributed}]\label{corollaryShmConverge}
    $\forall t\ge0, \Psi(b^{(T)})-\Psi(b^*) \le \frac{D(b^{(0)}\|b^*)}{T}$.
\end{corollary}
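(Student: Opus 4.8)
The plan is to reduce this to the EG convergence bound of \Cref{propositionEGConverge} via the two ``sandwich'' relations between the Shmyrev potential $\Psi$ and the EG potential $\Phi$. The crucial structural remark is that $b^{(T)}$ is exactly the proportional-response update of $b^{(T-1)}$: writing $b = b^{(T-1)}$, the update $b'$ appearing in \Cref{lemmaShmLTEG} is precisely $b^{(T)}$.

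Concretely, I would first apply \Cref{lemmaShmLTEG} with $b = b^{(T-1)}$ to obtain
\begin{equation*}
\Psi(b^{(T)}) \le \Phi(b^{(T-1)}) + \sum_{i\in[n]} B_i\log B_i .
\end{equation*}
Then \Cref{lemmaShmEqEG} supplies the exact identity $\Psi(b^*) = \Phi(b^*) + \sum_{i\in[n]} B_i\log B_i$. Subtracting the identity from the inequality, the additive terms $\sum_{i} B_i\log B_i$ cancel and
\begin{equation*}
\Psi(b^{(T)}) - \Psi(b^*) \le \Phi(b^{(T-1)}) - \Phi(b^*) .
\end{equation*}
Finally I would bound the right-hand side with \Cref{propositionEGConverge}, yielding $\Psi(b^{(T)}) - \Psi(b^*) \le D(b^{(0)}\|b^*)/T$. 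If one wants the bound for every iterate $b^{(t)}$ with $t \ge T$, apply the monotonicity \Cref{corollaryShmMonoDec} to propagate it forward, using $\Psi(b^{(t)}) \le \Psi(b^{(T)})$ for $t \ge T$.

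There is no real obstacle here: the only points requiring care are the off-by-one indexing (the Shmyrev value at step $T$ is controlled by the EG value at step $T-1$, consistent with the asymmetric indexing already visible in \Cref{theoremPRconverge}) and the exact cancellation of the constant $\sum_i B_i\log B_i$ between \Cref{lemmaShmLTEG} and \Cref{lemmaShmEqEG}. Notably this route uses neither the telescoping argument of \Cref{lemmaEGtelescope} directly nor the monotonicity of $\Psi$ (except for the optional extension to later iterates); it simply inherits the convergence rate from the EG analysis.
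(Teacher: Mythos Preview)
Your proposal is correct and follows essentially the same route as the paper's proof, which tersely says ``Apply \Cref{lemmaShmLTEG} to \Cref{propositionEGConverge}.'' You have simply made explicit the use of \Cref{lemmaShmEqEG} to cancel the additive constant $\sum_i B_i\log B_i$, which the paper's one-line proof leaves implicit but is indeed required.
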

\begin{proof}
Apply \cref{lemmaShmLTEG} to \cref{propositionEGConverge}.
\end{proof}

Lastly, we can upper bound the value $D(b^*\|b^{(0)})$ in terms of dimensions $m$ and $n$ given that each buyer initially divides the budget equally between all items such that $b^{(0)}_{ij} = \frac{B_i}{m}$.

\begin{lemma}[Lemma 13, \citep{birnbaum2011distributed}; Theorem 7, \citep{gao2020first}]\label{lemmaInit}
If $b_{ij}^{(0)} = \frac{B_i}{m}$ for all $i$ and $j$, then $D(b^*\|b^{(0)}) \le \log m$.
\end{lemma}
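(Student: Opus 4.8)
The plan is to compute $D(b^*\|b^{(0)})$ directly from the definition and exploit two facts: that $b^{(0)}_{ij} = B_i/m$ is uniform across goods, and that $b^*$ lies in $\mathcal{B}$, i.e. $\sum_{j} b^*_{ij} = B_i$ for every buyer $i$. First I would expand
\begin{equation*}
D(b^*\|b^{(0)}) = \sum_{i\in[n],\,j\in[m]} b^*_{ij}\log\frac{b^*_{ij}}{B_i/m} = \sum_{i\in[n],\,j\in[m]} b^*_{ij}\log\frac{m\,b^*_{ij}}{B_i}.
\end{equation*}
Pulling out the factor $m$ gives $\sum_{ij} b^*_{ij}\log m + \sum_{ij} b^*_{ij}\log\frac{b^*_{ij}}{B_i}$. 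Since $\sum_{ij} b^*_{ij} = \sum_i B_i = \|B\|_1 = 1$ because $B\in\mathbb{S}^n$, the first term is exactly $\log m$. So it remains to show the second term, $\sum_{i\in[n],j\in[m]} b^*_{ij}\log\frac{b^*_{ij}}{B_i}$, is nonpositive.

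For the second term I would work buyer-by-buyer. Fix $i$ and write $q_{ij} := b^*_{ij}/B_i$; since $b^*\in\mathcal{B}$ we have $q_{ij}\ge 0$ and $\sum_{j} q_{ij} = 1$, so $q_{i,*}$ is a probability distribution on $[m]$. Then
\begin{equation*}
\sum_{j\in[m]} b^*_{ij}\log\frac{b^*_{ij}}{B_i} = B_i\sum_{j\in[m]} q_{ij}\log q_{ij} = -B_i\, H(q_{i,*}) \le 0,
\end{equation*}
where $H(\cdot)$ is the Shannon entropy, which is nonnegative for any probability distribution. Summing over $i$ (all $B_i>0$) keeps the sum nonpositive. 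Combining, $D(b^*\|b^{(0)}) = \log m + \sum_i (-B_i H(q_{i,*})) \le \log m$, which is the claim. Alternatively, one can phrase the nonpositivity via the log-sum inequality or as $-D(q_{i,*}\|\text{uniform}) - \log m \le -\log m$, but the entropy argument is the cleanest.

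I do not expect a genuine obstacle here; the only thing to be careful about is the normalization conventions. Specifically, the argument uses $\|B\|_1 = 1$ (stated as $B\in\mathbb{S}^n$ in the preliminaries) to turn $\sum_{ij} b^*_{ij}$ into $1$, and it uses $b^*\in\mathcal{B}$ (equivalently $\mathcal{S}$) so that the row sums of $b^*$ are exactly $B_i$ rather than merely bounded. If one instead only assumes $B\in\mathbb{S}^n$ without the $\ell_1$ normalization feeding through, the first term would be $\|B\|_1\log m$, but under the paper's conventions this is just $\log m$. Everything else is a one-line entropy nonnegativity fact, so the proof is essentially the two displays above.
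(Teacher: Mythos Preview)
Your proof is correct and follows essentially the same route as the paper: expand the KL divergence, substitute $b^{(0)}_{ij}=B_i/m$, split off the $\log m$ term using $\sum_{ij} b^*_{ij}=1$, and observe that the remaining sum $\sum_{ij} b^*_{ij}\log(b^*_{ij}/B_i)$ is nonpositive. The paper justifies the last step simply by noting $b^*_{ij}/B_i\le 1$, while you phrase it as nonnegativity of Shannon entropy---these are the same observation.
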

\begin{proof}
Evaluating $D(b^*\|b^{(0)})$, we have
\begin{equation}
D(b^*\|b^{(0)}) = \sum_{ij} b_{ij}^* \log\frac{b_{ij}^*}{b_{ij}^{(0)}} = \sum_{ij} b_{ij}^* \log\frac{mb_{ij}^*}{B_i}=\log m + \sum_{ij} b_{ij}^* \log \frac{b_{ij}^*}{B_i}\le \log m,
\end{equation}
as $b_{ij}^* \le B$, and $\log \frac{b_{ij}^*}{B_i}$ is negative. 
\end{proof}

Plugging \cref{lemmaInit} into \cref{propositionEGConverge} and \cref{corollaryShmConverge}, we obtain the convergence guarantee of \cref{theoremPRconverge}.

\section{Convergence guarantees for the FPR dynamics}\label{appendixFPRProof}

In this section, we prove the convergence guarantee of the faulty proportional response (FPR) dynamics. We first examine the immediate effects of allowing erroneous estimations of $u$ and $p$ in the FPR dynamics. Let $\hat B_i^{(t)} = \sum_{j\in[m]} \hat b_{ij}^{(t)}$. Note that $\hat B_i^{(t)}\ne B_i$ as the normalization step of constructing $\hat b_{ij}$ is erroneous. By the construction of $\hat b^{(t)}$ by the FPR dynamics, 
\begin{equation}
\hat b_{ij}^{(t)} = \frac{B_i}{\tilde \nu_i^{(t-1)}} v_{ij}\frac{\hat b_{ij}^{(t-1)}}{\tilde p_{j}^{(t-1)}},
\end{equation}
we can find that
\begin{equation}
\hat B_i^{(t)} =  \sum_{j\in[m]} \hat b_{ij}^{(t)} =  \sum_{j\in[m]}\frac{B_i}{\tilde \nu_i^{(t-1)}} v_{ij}\frac{\hat b_{ij}^{(t-1)}}{\tilde p_{j}^{(t-1)}} = \frac{B_i}{\tilde \nu_i^{(t-1)}} \sum_{j\in[m]}v_{ij}\frac{\hat b_{ij}^{(t-1)}}{\tilde p_{j}^{(t-1)}} = B_i\frac{\hat \nu_i^{(t-1)}}{\tilde \nu_i^{(t-1)}},
\end{equation}
where we can obtain the following inequality by definition of $\tilde \nu_i$:
\begin{equation}
\frac{B_i}{1+\varepsilon_\nu} \le \hat B_i^{(t)} \le \frac{B_i}{1-\varepsilon_\nu}
\end{equation}
By summing $\hat{B_i}$, we find that 
\begin{equation}
\frac{1}{1+\varepsilon_\nu} \le \sum_{i\in [n]} \hat B_i^{(t)} = \sum_{i\in [n], j \in [m]} \hat b_{ij}^{(t)} = \sum_{j \in [m]} \hat p_{j}^{(t)}  \le \frac{1}{1-\varepsilon_\nu}.
\end{equation}

We now prove \cref{theoremFPRconverge}.

\fprconverge*

\begin{proof}
Similar to the proof of \cref{lemmaEGtelescope}, we first lower bound $\hat\Delta_t = \sum_{i\in[n], j \in [m]} b_{ij}^*\log \frac{b_{ij}^*}{\hat b^{(t)}_{ij}}-\sum_{i\in[n], j \in [m]} b_{ij}^*\log \frac{b_{ij}^*}{\hat b^{(t+1)}_{ij}}$, where we use as follows:
\begin{align}
\hat\Delta_t 
    &= \sum_{i\in[n], j \in [m]} b_{ij}^*\log \frac{\hat b^{(t+1)}_{ij}}{\hat b^{(t)}_{ij}}= \sum_{i\in[n], j \in [m]} b_{ij}^*\log \frac{B_iv_{ij}}{\tilde p_j^{(t)}\tilde \nu_i^{(t)}}\\
    &= \sum_{i\in[n], j \in [m]} \left(b_{ij}^*\log \frac{v_{ij}}{p_j^*} + b_{ij}^*\log \frac{p_j^*}{\tilde p_j^{(t)}} - b_{ij}^*\log \tilde \nu_i^{(t)} + b_{ij}^*\log B_i\right)\\
    &= \sum_{i\in[n], j \in [m]} b_{ij}^*\log \frac{v_{ij}}{p_j^*} + \sum_{j \in [m]} p_j^*\log \frac{p_j^*}{\tilde p_j^{(t)}} - \sum_{i\in[n]} B_i\log \tilde \nu_i^{(t)} + \sum_{i\in[n]} B_i\log B_i\\ 
    &= -\Phi(b^*) + \sum_{j \in [m]} p_j^*\log \frac{p_j^*}{\tilde p_j^{(t)}} - \sum_{i\in[n]} B_i\log \tilde \nu_i^{(t)}.\label{eqTele}
\end{align}
We now lower bound the second and third terms from the above individually as follows. Starting with the second term,
\begin{align}
 \sum_{j \in [m]} p_j^*\log \frac{p_j^*}{\tilde p_j^{(t)}} &=  \sum_{j \in [m]} p_j^*\log \frac{p_j^*}{\hat p_j^{(t)}/\sum_{j'\in[m]} \hat p_{j'}^{(t)}} + \sum_{j \in [m]} p_j^*\log \frac{\hat p_j^{(t)}}{\tilde p_j^{(t)}} - \sum_{j \in [m]} p_j^*\log \sum_{j'\in[m]} \hat p_{j'}^{(t)}\\    
 &=  D\left(p_j^*\middle\|\frac{\hat p_j^{(t)}}{\sum_{j'\in[m]} \hat p_{j'}^{(t)}}\right)+ \sum_{j \in [m]} p_j^*\log \frac{\hat p_j^{(t)}}{\tilde p_j^{(t)}} - \log \sum_{j\in[m]} \hat p_{j}^{(t)}\\   
  &\ge  D\left(p_j^*\middle\|\frac{\hat p_j^{(t)}}{\sum_{j'\in[m]} \hat p_{j'}^{(t)}}\right)+ \sum_{j \in [m]} p_j^*\log \frac{1}{1+\varepsilon_p} - \log \frac{1}{1-\varepsilon_\nu}\\  
  &\ge  D\left(p_j^*\middle\|\frac{\hat p_j^{(t)}}{\sum_{j'\in[m]} \hat p_{j'}^{(t)}}\right) - \varepsilon_p - 2\varepsilon_\nu \ge - \varepsilon_p - 2\varepsilon_\nu\label{eqPbound}
\end{align}
Moving on the the third term, 
\begin{align}
 - \sum_{i\in[n]} B_i\log \tilde \nu_i^{(t)} &=  - \sum_{i\in[n]} B_i\log \hat \nu_i^{(t)} - \sum_{i\in[n]} B_i\log \frac{\tilde \nu_i^{(t)}}{\hat \nu_i^{(t)}}\\
 &\ge - \sum_{i\in[n]} B_i\log \hat \nu_i^{(t)} - \sum_{i\in[n]} B_i\log (1+\varepsilon_\nu)\\
 &= - \sum_{i \in [n]} B_i \log {\sum_{j\in [m]}\frac{\hat p_j^{(t)}}{\tilde p_j^{(t)}}} \frac{v_{ij}\hat b_{ij}^{(t)}}{\hat p_j^{(t)}} - \log (1+\varepsilon_\nu)\\
 &\ge - \sum_{i \in [n]} B_i \log \sum_{j\in [m]}\frac{1}{1-\varepsilon_p} \frac{v_{ij}\hat b_{ij}^{(t)}}{\hat p_j^{(t)}} - \log (1+\varepsilon_\nu)\\
 &= - \sum_{i \in [n]} B_i \log \sum_{j\in [m]} \frac{v_{ij}\hat b_{ij}^{(t)}}{\hat p_j^{(t)}} + \log (1-\varepsilon_p) - \log (1+\varepsilon_\nu)\\
  &\ge \Phi(\hat b^{(t)})- 2\varepsilon_p - \varepsilon_\nu \label{eqEGest}
\end{align}
Hence, in total, we find that 
\begin{equation}
    \hat\Delta_t 
    = \sum_{i\in[n], j \in [m]} b_{ij}^*\log \frac{\hat b^{(t+1)}_{ij}}{\hat b^{(t)}_{ij}} \ge \Phi(\hat b^{(t)}) - \Phi(b^*) - 3\varepsilon_p - 3\varepsilon_\nu.
\end{equation}
Taking the telescoping sum of $\hat\Delta_t$, we see that
\begin{equation}
\sum_{t=0}^T \hat \Delta_t  = \sum_{i\in[n], j \in [m]} b_{ij}^*\log \frac{b_{ij}^*}{b^{(0)}_{ij}} - \sum_{i\in[n], j \in [m]} b_{ij}^*\log \frac{b_{ij}^*}{\hat b^{(t+1)}_{ij}} \ge \sum_{t=0}^T \left( \Phi(\hat b^{(t)}) - \Phi(b^*) - 3\varepsilon_p - 3\varepsilon_\nu\right).
\end{equation}
Taking the upper bound of $\sum_{t=0}^T \hat \Delta_t$, we obtain
\begin{align}
\sum_{t=0}^T \hat \Delta_t  &= \sum_{i\in[n], j \in [m]} b_{ij}^*\log \frac{b_{ij}^*}{b^{(0)}_{ij}} - \sum_{i\in[n], j \in [m]} b_{ij}^*\log \frac{\hat b_{ij}^*}{\hat b^{(t+1)}_{ij}}\\
&\le \begin{multlined}[t]\sum_{i\in[n], j \in [m]} b_{ij}^*\log \frac{b_{ij}^*}{b^{(0)}_{ij}} - \sum_{i\in[n], j \in [m]} b_{ij}^*\log \frac{b_{ij}^*}{\hat b^{(t+1)}_{ij}/\sum_{i'\in[n], j'\in [m]}\hat b^{(t+1)}_{i'j'}} \\+  \sum_{i\in[n], j \in [m]} b_{ij}^*\log \sum_{i'\in[n], j'\in [m]}\hat b^{(t+1)}_{i'j'}
\end{multlined}\\
&= D(b^*\|b^{(0)}) - D\left(b^*\middle\|\frac{\hat b^{(t+1)}}{\sum_{i'\in[n], j'\in [m]}\hat b^{(t+1)}_{i'j'}}\right) + \log \sum_{i'\in[n], j'\in [m]}\hat b^{(t+1)}_{i'j'}
\\
&\le  D(b^*\|b^{(0)}) - D\left(b^*\middle\|\frac{\hat b^{(t+1)}}{\sum_{i'\in[n], j'\in [m]}\hat b^{(t+1)}_{i'j'}}\right) + \log (1+\varepsilon_\nu)\\
&\le  D(b^*\|b^{(0)}) + \varepsilon_\nu\label{eqtelebound}
\end{align}

Hence, we obtain $\sum_{t=0}^T \left(\Phi(\hat b^{(t)}) - \Phi(b^*)\right) \le D(b^*\|b^{(0)}) + (3T+4) \varepsilon_\nu + (3T+3) \varepsilon_p$. Instead of $T$, we plug in $T-1$ to obtain
\begin{equation}
\sum_{t=0}^{T-1} \left(\Phi(\hat b^{(t)}) - \Phi(b^*)\right)  \le D(b^*\|b^{(0)}) + (3T+1) \varepsilon_\nu + 3T \varepsilon_p.
\end{equation}
With a simple observation that
\begin{equation}
T\cdot\min_{t\in[T]}\left(\Phi(\hat b^{(t)}) - \Phi(b^*)\right) \le  \sum_{t=0}^{T-1} \left(\Phi(\hat b^{(t)}) - \Phi(b^*)\right),
\end{equation}
we find 
\begin{equation}
\min_{t\in[T]}\left(\Phi(\hat b^{(t)}) - \Phi(b^*)\right) \le \frac{D(b^*\|b^{(0)})}{T} + 4 \varepsilon_\nu + 3 \varepsilon_p.
\end{equation}

From \cref{lemmaInit}, we know that $D(b^*\|b^{(0)})\le \log m$. Then by setting
$\varepsilon_{\nu} \le \frac{\log m}{8T}$ and $\varepsilon_{p} \le \frac{\log m}{6T}$, 
we obtain
\begin{equation}
    \min_{t\in[T]}\Phi(\hat b^{(t)}) - \Phi(b^*) \le \frac{2\log m}{T}.
\end{equation}
\end{proof}

Next, we prove \cref{theoremEsticonverge}.

\esticonverge*

\begin{proof}
We slightly modify the proof of \cref{theoremFPRconverge}, and note that by \cref{eqTele} and \cref{eqPbound}, we have
\begin{align}
\hat\Delta_t 
    &= -\Phi(b^*) + \sum_{j \in [m]} p_j^*\log \frac{p_j^*}{\tilde p_j^{(t)}} - \sum_{i\in[n]} B_i\log \tilde \nu_i^{(t)}
    &\ge -\Phi(b^*) - \varepsilon_p - 2\varepsilon_\nu - \sum_{i\in[n]} B_i\log \tilde \nu_i^{(t)}
\end{align}
Taking the telescoping sum and the upper bound from \cref{eqtelebound}, we obtain
\begin{equation}
\sum_{t=0}^{T-1} \left(- \sum_{i\in[n]} B_i\log \tilde \nu_i^{(t)}-\Phi(b^*)\right) \le \sum_{t=0}^{T-1}  \hat\Delta_t + 2T\varepsilon_\nu + T \varepsilon_p \le D(b^*\|b^{(0)}) + (2T+1)\varepsilon_\nu + T \varepsilon_p,
\end{equation}
where we can note 
\begin{equation}
\min_{t\in [T]} \left(- \sum_{i\in[n]} B_i\log \tilde \nu_i^{(t)} - \Phi(b^*)\right)  \le \frac{D(b^*\|b^{(0)})}{T} + 3 \varepsilon_\nu + \varepsilon_p.
\end{equation}
Let $t^* = \argmin_{t\in[T]} \left(- \sum_{i\in[n]} B_i\log \tilde \nu_i^{(t)} - \Phi(b^*)\right)$. Then by \cref{eqEGest}, we have the following:
\begin{equation}
 - \sum_{i\in[n]} B_i\log \tilde \nu_i^{(t)} \ge \Phi(\hat b^{(t)})- 2\varepsilon_p - \varepsilon_\nu
\end{equation}
Then we can obtain
\begin{equation}
\Phi(\hat b^{(t^*)}) - \Phi(b^*)  \le \frac{D(b^*\|b^{(0)})}{T} + 4 \varepsilon_\nu + 3 \varepsilon_p.
\end{equation}
Lastly by setting
$\varepsilon_{\nu} \le \frac{\log \nu}{8T}$ and $\varepsilon_{p} \le \frac{\log m}{6T}$, 
we obtain
\begin{equation}
    \Phi(\hat b^{(t^*)}) - \Phi(b^*) \le \frac{2\log m}{T}.
\end{equation}
\end{proof}

\section{Experimental and implementation details}
\label{appendixExp}
Our experiments are conducted on a single NVIDIA P100 GPU and written with the \texttt{PyTorch} library~\citep{paszke2019pytorch}. The optimal objective value is approximately computed by taking the results of the $1000$-th iteration of the PR dynamics.

For the projected gradient descent (PGD) algorithm, our implementation is unlike \citet{gao2020first}, whose task is based on the CEEI scenario where agents are given a unit of fake money and whose end goal is only the allocation. We require information on both the allocation $x$ and price $p$, hence our algorithm output should be the bids $b$. Therefore, instead of formulating the problem after the EG objective function, we mirror\footnote{Pun intended.} the PR dynamics in its equivalence to mirror descent~\citep{birnbaum2011distributed} on the Shmyrev objective function and perform PGD on the latter (see \cref{algoPGD}).

\begin{algorithm}
\caption{Projected Gradient Descent}
\label{algoPGD}
\KwIn{Budget $B$, Value $v$, Learning rate $\gamma$, Iterations $T$}
\KwOut{Bids $b$}
\DontPrintSemicolon
$b_{ij}^{(0)} = \frac{B_i}{m}$\;
\For{$t = 0$ to $T$}{
    $r_{ij}^{(t)} = b_{ij}^{(t)} - \gamma \cdot (1 - \log v_{ij}/p_j^{(t)})$ \tcp{Gradient step}
    \For{$i = 0$ to $n$}{
       $b_{i, *}^{(t+1)} = \operatorname{Proj}(r_{i, *}^{(t)} \to \{x \in \mathbb{R}_+^n, \sum_k x_k = B_i\})$ \;\tcp{Projection step onto a B\_i-simplex}
    }
}
\Return $b^{(T)}$
\end{algorithm}

We formulate the Shmyrev objective function into the following form to obtain convergence guarantees and the step size:
\begin{equation}
f(x) = h(Ax) + \langle q, x \rangle
\end{equation}
where $x \in \mathbb{R}^n, A\in \mathcal{M}_{d\times n}(\mathbb{R}), h: \mathbb{R}^d \to \mathbb{R}, q \in \mathbb{R}^n$. Considering a flattened vector of the bids $b$, we note that if 
\begin{equation}
A = n\left\{\begin{pmatrix}1\\ 1\\ \vdots\\ 1\end{pmatrix}\right.\otimes 
\underbrace{\begin{pmatrix}
1& 0& \cdots& 0\\
0& 1& \cdots& 0\\
\vdots& \vdots& \ddots& \vdots\\
0& 0& \cdots& 1
\end{pmatrix}}_m,\quad q = \begin{pmatrix}-\log v_{11} \\ -\log v_{12}\\ \vdots \\ -\log v_{mn}\end{pmatrix},\quad h(x) = \sum_i x_i \log x_i
\end{equation}
then $f = \Psi$. Then by Theorem 3 of \citep{gao2020first}, by setting a learning rate of $\gamma= 1/L\|A\|^2$, where $L = 1/\min_{j, t} p_j^{(t)}$, we get linear convergence. Note that $\|A\|^2 = n$. \citet{gao2020first} further provide a line search procedure to set the constant multiplier in the learning rate as well as provide sharper convergence guarantees, but as we only run for $16$ iterations, we do not perform the line search and fix the learning rate to the initial learning rate that \citet{gao2020first} use in their empirical studies, which is $1000 / L\|A\|^2$. 

For QAE, we set $M = \sqrt{T\sqrt{n}} /16 = 32$. We scale down $M$ by the constant factor of $16$ to save memory consumption on the GPU, as we simulate QAE by computing the full probability distribution over $[M]$. We compensate for the loss in accuracy of the estimation by employing the median-of-means estimator~\citep{nemirovsky1983problem}, where we take the median of $3$ estimators constructed from the mean of $7$ samples from the QAE subroutine. In addition, we perform the maximum finding classically by line search in our simulation as opposed to a randomized protocol in the quantum algorithm, whose effects are only in regard to ignoring the failure possibility.
\end{document}